\providecommand{\DontPrintSemicolon}{\dontprintsemicolon}
\theoremstyle{plain}
\newtheorem{theorem}{Theorem}[section]
\newtheorem*{mainthm}{Main Theorem}
\newtheorem{lemma}[theorem]{Lemma}
\newtheorem{proposition}[theorem]{Proposition}
\newtheorem{definition}[theorem]{Definition}
\newtheorem{corollary}[theorem]{Corollary}
\theoremstyle{definition}
\theoremstyle{remark}
\newtheorem{remark}[theorem]{Remark}
\newcommand{\bmu}{\ensuremath{\boldsymbol{\mu}}\xspace}
\newcommand{\mms}{\ensuremath{\textsc{mms}}\xspace}
\newcommand{\pmms}{\ensuremath{\textsc{pmms}}\xspace}
\newcommand{\gmms}{\ensuremath{\textsc{gmms}}\xspace}
\newcommand{\efo}{\ensuremath{\textsc{ef}\oldstylenums{1}}\xspace}
\newcommand{\efx}{\ensuremath{\textsc{efx}}\xspace}
\newcommand{\ef}{\ensuremath{\textsc{ef}}\xspace}
\newcommand{\mysetminusD}{\hbox{\tikz{\draw[line width=0.6pt,line cap=round] (3pt,0) -- (0,6pt);}}}
\newcommand{\mysetminusT}{\mysetminusD}
\newcommand{\mysetminusS}{\hbox{\tikz{\draw[line width=0.45pt,line cap=round] (2pt,0) -- (0,4pt);}}}
\newcommand{\mysetminusSS}{\hbox{\tikz{\draw[line width=0.4pt,line cap=round] (1.5pt,0) -- (0,3pt);}}}
\newcommand{\mysetminus}{\mathbin{\mathchoice{\mysetminusD}{\mysetminusT}{\mysetminusS}{\mysetminusSS}}}
\DeclareMathOperator*{\argmax}{arg\,max}
\title{Multiple Birds with One Stone: Beating $1/2$ for\\  EFX and GMMS via Envy Cycle Elimination}
\author[1,2]{Georgios Amanatidis}
\author[3]{Evangelos Markakis}
\author[3]{Apostolos Ntokos}
\affil[1]{University of Essex, {\small Department of Mathematical Sciences}}
\affil[2]{University of Amsterdam, {\small Institute for Logic, Language and Computation}}
\affil[3]{Athens University of Economics and Business, {\small Department of Informatics}}
\begin{document}
\maketitle

\begin{abstract}
Several relaxations of envy-freeness, tailored to fair division in settings with indivisible goods, have been introduced within the last decade. Due to the lack of general existence results for most of these concepts, great attention has been paid to establishing approximation guarantees. In this work, we propose a simple algorithm that is \emph{universally fair} in the sense that it returns allocations that have good approximation guarantees with respect to four such fairness notions at once. In particular, this is the first algorithm achieving a $(\phi -1)$-approximation of \emph{envy-freeness up to any good} (\efx) and a $\frac{2}{\phi +2}$-approximation of \emph{groupwise maximin share fairness} (\gmms), where $\phi$ is the golden ratio ($\phi \approx 1.618$). The best known approximation factor,  in polynomial time, for either one of these fairness notions prior to this work was $1/2$. Moreover, the returned allocation achieves \emph{envy-freeness up to one good} (\efo) and a $2/3$-approximation of \emph{pairwise maximin share fairness} (\pmms). 
While \efx is our primary focus, we also exhibit how to fine-tune our algorithm and further improve  the guarantees for \gmms or \pmms.

Finally, we show that $\gmms$---and thus $\pmms$ and $\efx$---allocations always exist when the number of goods does not exceed the number of agents by more than two. 
\end{abstract}


\section{Introduction}
\label{sec:intro}
The mathematical study of \emph{fair division} has a long and intriguing history, starting with the formal introduction of the cake-cutting problem by Banach, Knaster and Steinhaus \citep{Steinhaus48}. Ever since, we have seen the emergence of several fairness criteria, such as the classic notion of \emph{envy-freeness}, that has  a  dominant role in the literature, see e.g., \citet{BT96,Moulin03,RW98,HandbookComSoC2016}  and references therein. 

On the other hand, the computational study of finding fair allocations  when the resources are \emph{indivisible items} is more recent.  It is motivated by the realization that envy-freeness and other classic fairness notions 
are too demanding for the discrete setting. In particular, even with two agents and one item,
it is impossible to produce an allocation with \emph{any} reasonable worst-case approximation guarantee with respect to envy-freeness.

Within the last decade, these considerations have led to natural relaxations of envy-freeness, which are more suitable for the context of indivisible goods. The most prominent examples, that are also the focus of our work, include the notions of envy-freeness up to one good (\efo) and up to any good (\efx), maximin share fairness (\mms), as well as pairwise and groupwise maximin share fairness (\pmms and \gmms respectively). 
These relatively new concepts breathed new life into the field of fair division, but  they do not come without their issues. Most importantly, although they are generally easier to satisfy than envy-freeness, proving existence results has turned out to be a very challenging task---with the exception of \efo. For instance, it is an open problem to resolve whether \pmms and \efx allocations always exist, even for three and four agents with additive valuation functions, respectively. Surprisingly, existence remains unresolved even when the number of items 
is just slightly larger than the number of agents. 

A reasonable approach is to  focus on approximate versions of these relaxations. Indeed, this has led to a series of positive results, obtaining constant factor approximation algorithms for all the aforementioned relaxed criteria (see Related Work). 
However, improving on the currently known factors seems to be approaching a stagnation point. 
For example, soon after the  introduction of \efx, a $1/2$-approximation was established \citep{PR18}, but there has been no progress beyond $1/2$, despite the active interest on this notion.

We should also stress that these notions capture quite different aspects of fairness. A good approximation of any one of \efo, \efx, \mms and \pmms does not necessarily imply particularly strong guarantees for any of the others \citep{AmanatidisBM18}. Hence, it becomes compelling to ask for allocations that attain good guarantees with respect to several fairness notions simultaneously. Such results are rather scarce in the literature, e.g.,  \citep{BBMN18,GargM19} or are purely existential \citep{CaragiannisKMPS19}.

Motivated by the lack of such \emph{universally fair} algorithms, we look at the problem of computing allocations that (approximately) satisfy several fairness notions at the same time. Along the way, we aim to improve the  state-of-the-art for two of these notions, namely \efx and \gmms.
Somewhat unexpectedly, to do so we rely on simple subroutines that have been repeatedly used   in  fair division before. 

\medskip

\noindent\textbf{Contribution.} Our main contribution is an algorithm that is universally fair, in the sense that it achieves a better than $1/2$-approximation for all the notions under consideration. The main results can be summarized in the following statement.
\begin{mainthm}\label{thm:thm_main}
	We can efficiently compute an allocation that is simultaneously
	\begin{enumerate}[leftmargin=*,itemsep=3pt,topsep=2pt,parsep=0pt,partopsep=0pt,label=\rm{\roman*})] 
		\item \efx up to a factor of $0.618$, \label{part:thm_main_a}
		\item \gmms up to a factor of $0.553$ (thus, ditto for \mms), \label{part:thm_main_b}
		\item \efo, and 
		\item \pmms up to a factor of $0.667$.		
	\end{enumerate}
\end{mainthm}

We view parts \ref{part:thm_main_a} and \ref{part:thm_main_b} of breaking the $1/2$-approximation barrier for \efx and \gmms, as the  highlights of this work. These desirable properties are attained by Algorithm \ref{alg:efx} (Section \ref{sec:EFX}). We also suggest variations with improved guarantees for one notion at the expense of the others. The factors achieved by Algorithm \ref{alg:efx} and its variants, compared against the state of the art for each notion, are shown in Table \ref{tab:results}.

\begin{table}[ht]
	\centering
	\begin{tabular}{@{}lllll@{}}
		\toprule						  & \efx      & \efo      & \gmms     & \pmms  \\ \midrule
		{\small Best known (poly-time)}          & {$0.5$}   & {$1$}     & {$0.5$}   & {$0.781$} \\ 
		{\small Algorithm \ref{alg:efx}}         & {$0.618$} & {$1$}     & {$0.553$} & {$0.667$} \\ 
		{\small Variant in Thm.~\ref{thm:gmms2}} & {$0.6$}   & {$1$}     & {$0.571$} &  {$0.667$} \\ 
		{\small Variant in Thm.~\ref{thm:pmms2}} 
		& {$0.618$} & {$0.894$} & {$0.553$} & {$0.717$} \\ 
		\bottomrule
	\end{tabular}
	\caption{{\small Summary of our results and state of the art. Known results in the first row are due to \citet{PR18}, \citet{LMMS04}, \citet{BBMN18}, and \citet{Kurokawa17}, respectively. }}
	\label{tab:results}
\end{table}

At a technical level, our results are making use of two algorithms that are known to produce only \efo allocations. The first one is a simple draft algorithm and the second one is the envy-cycle-elimination algorithm of \citet{LMMS04}. Although these algorithms on their own do not possess \emph{any} good approximations with respect to \efx or \gmms, our main insight is that by carefully combining  parametric versions of these algorithms, we can obtain approximation guarantees for all the fairness criteria of interest here.   

In Section \ref{sec:m=n+2}, we return to the intriguing issue of existence. We show that $\gmms$---and thus $\pmms$ and $\efx$---allocations always exist, and can be found efficiently, when the number of goods does not exceed the number of agents by more than two. While this is a simple case, it is still non-trivial to tackle and has remained unresolved. Quite surprisingly, the idea of envy cycle elimination again comes to the rescue, after we carefully alter a small part of the instance. 
\medskip 


\noindent\textbf{Related work.} Envy-freeness was initially suggested by \citet{GS58}, and more formally by \citet{Foley67} and \citet{Varian74}. Regarding the relaxations of envy-freeness, \efo was  defined by \citet{Budish11}, but it was also implicit in the work of \citet{LMMS04}. Budish also defined the notion of maximin shares, based on concepts by \citet{Moulin90}. Later on, \citet{CaragiannisKMPS19} introduced the notions of \efx and \pmms, and even more recently, \citet{BBMN18} proposed to study \gmms allocations. Further variants and generalizations of the criteria we present here have also been considered, see e.g., \citet{Suk18}.

\efo allocations are known to be efficiently computable by the envy-cycle-elimination algorithm of \citet{LMMS04}. For all other notions, the focus has been on approximation algorithms since existence is either not guaranteed or is still an open problem. The most well studied notion is \mms with a series of positive results \citep{AMNS17,KurokawaPW18,BM17,GMT19}, 
and best known approximation of $3/4$ \citep{GHSSY18,GargTaki19}.
Exact and approximate \efx allocations with both additive and general valuations were studied by \citet{PR18}, achieving the currently best $1/2$-approximation. Recently, a polynomial time algorithm with the same guarantee has been obtained by \citet{CCLW19}. The same factor is also the best known for \gmms allocations in polynomial time by \citet{BBMN18} via a variant of envy cycle elimination. 
Finally, the currently best approximation of $0.781$ for \pmms is due to \citet{Kurokawa17}, using an approach similar to ours. Connections between the approximate versions of these criteria have been investigated by  \citet{AmanatidisBM18}; see \ref{app:connections} for a comparison of these implications with our results. 

Some of these fairness criteria have also been studied in combination with other objectives, such as Pareto optimality \citep{BMV18},
truthfulness \citep{AmanatidisBM16,ABCM17}
or maximizing the Nash welfare \citep{CaragiannisKMPS19,CaragiannisGH19,CKMS19}. 

In a very recent manuscript, \citet{ChaudhuryGM2020} show that \efx allocations always exist for instances with three agents and additive valuation functions.
While the existence of \efx allocations remains an open problem for more than three additive agents, it is possible to compute such allocations when it is allowed to discard some of the goods \cite{CaragiannisGH19,CKMS19} or when the valuation functions are special cases of additive functions \cite{AleksandrovWalsh2019,AmanatidisBRHV20}.
In \cite{CKMS19}, in parallel and independently of our work, 
\citet{CKMS19} also improve the $1/2$ factor for \gmms but not in polynomial time. The focus of their work is different and involves algorithms for \efx allocations by discarding a relatively small number of items. As an implication of their main results, they too follow closely the proof of Proposition 3.4 of \citep{AmanatidisBM18}, albeit from a different starting point, and obtain a pseudo-polynomial time $4/7$-approximation algorithm. This matches the guarantee of our Theorem \ref{thm:gmms2} for \gmms. However, to the best of our knowledge, this algorithm does not directly provide any good approximation for \efx. 


\section{Preliminaries}
\label{sec:prelims}
Let $N = \{1, 2, \ldots, n\}$ be a set of $n$ agents and $M$ be a set of $m$ indivisible items. 
Unless otherwise stated, we assume that each agent is associated with a monotone, {\it additive} valuation function, i.e., for $S\subseteq M$, 
$v_i(S) = \sum_{g\in S} v_i(\{g\})$. For simplicity, we write $v_{i}(g)$ instead of $v_i(\{g\})$, for $g\in M$. Monotonicity in this additive 
setting is equivalent to all items being \emph{goods}, i.e., $v_{i}(g)\geq 0$ for every $i\in N, g\in M$. For  the algorithms presented in this work, we assume that their input contains the valuation function of each involved agent, i.e., $v_i(g)$ is given to the algorithm for every agent $i$ and good $g$.

We consider the most standard setting in fair division, where we want to allocate all the goods to the agents (\emph{no free disposal}). An allocation of $M$ to the $n$ agents is therefore a partition,
$\mathcal{A} = (A_1,\ldots,A_n)$, where $A_i\cap A_j = \emptyset$ and $\cup_i A_i = M$.
By $\Pi_n(M)$ we denote  the set of all partitions of a set $M$ into $n$ bundles.

Although we allow for multiple goods to have the exact same value for a specific agent, we assume a deterministic tie-breaking rule for the goods (e.g., break ties lexicographically). This way we may abuse the notation and write $g = \argmax_{h\in M} v_{i}(h)$ instead of ``let $g$ be the lexicographically first element of $\argmax_{h\in M} v_i(h)$''.

\subsection{Fairness Concepts}
\label{subsec:concepts}
All the fairness notions we work with are relaxations of the classic notion of envy-freeness. 

\begin{definition}
	\label{def:EF}
	An allocation $\mathcal{A} = (A_1,\ldots,A_n)$ is envy-free (\ef), if for every $i, j\in N$, $v_i(A_i) \geq v_i(A_j)$.
\end{definition}

As envy-freeness is too strong to ask for, when we deal with indivisible goods, several relaxed fairness notions have been introduced so as to obtain meaningful positive results.
We start with two additive relaxations, 
and their approximate versions,
where an agent may envy another agent, but only by an amount dependent on the value of a single good in the other agent's bundle. 
\begin{definition}\label{def:EF1-EFX}
	An allocation $\mathcal{A} = (A_1,\ldots,A_n)$ is an
	\begin{enumerate}[leftmargin=*,itemsep=3pt,topsep=2pt,parsep=0pt,partopsep=0pt,label=\rm{\alph*})]
		\item $\alpha$-\efo allocation ($\alpha$-envy-free up to one good), if for every pair of agents $i, j\in N$, with $A_j\neq\emptyset$, there exists a good $g\in A_j$, such that
		$v_i(A_i) \geq \alpha\cdot v_i(A_j\setminus \{g\})$.
		\item  $\alpha$-\efx allocation ($\alpha$-envy-free up to any good), if for every pair $i, j\in N$, with $A_j\neq\emptyset$ and every good $g\in A_j$, it holds that $v_i(A_i) \geq \alpha\cdot v_i(A_j\setminus \{g\})$.\footnote{\ The original definition required  the condition to hold for all $g\in A_j$ with $v_{i}(g) >0$. This is often dropped in the literature, under the assumption that  all values are  positive \citep{PR18,CaragiannisGH19}. For our work \emph{neither} assumption is needed.} \label{def:EFX} 
	\end{enumerate}
\end{definition}
Of course, for $\alpha=1$ we obtain precisely the notions of envy-freeness up to one good (\efo) \citep{Budish11} and envy-freeness up to any good (\efx) \citep{CaragiannisKMPS19}. 
It is easy to see that \ef implies \efx, which in turn implies \efo. 

On a different direction, an interesting family of fairness criteria has been developed around the notion of \emph{maximin shares},  also proposed by \citet{Budish11}. The idea behind maximin shares is to capture the worst-case guarantees of generalizing the famous cut-and-choose protocol to multiple agents: Suppose agent $i$ is asked to partition the goods into $n$ bundles, while knowing that the other agents will choose a bundle before her. In the worst case, she will be left with her least valuable bundle. Assuming that agents are risk-averse, agent $i$ would choose a partition that maximizes the minimum value of a bundle. This gives rise to the following definition.

\begin{definition}
	\label{def:mmshare}
	Given $n$ agents, and a subset  $S\subseteq M$ of goods, the $n$-maximin share of agent $i$ with respect to $S$ is:
	\[ \bmu_i(n, S) = \displaystyle\max_{\mathcal{A}\in\Pi_n(S)} \min_{A_j\in \mathcal{A}} v_i(A_j)\,.\]
\end{definition}
From the definition, it directly follows that $n\cdot \bmu_i(n, S)\le v_i(S)$.
When $S=M$, this quantity is just called the \textit{maximin share} of agent $i$. 
We say that $\mathcal{T} \in \Pi_n(M)$ is an \emph{$n$-maximin share defining partition} for agent $i$, if $\min_{T_j\in \mathcal{T}} v_i(T_j) = \bmu_i(n, M)$.
When it is clear from context what $n$ and $M$ are, we simply write $\bmu_i$ instead of $\bmu_i(n, M)$.

The most popular fairness notion based on maximin shares, referred to as \emph{maximin share fairness}, asks for a partition that gives each agent her (approximate) maximin share.
\begin{definition}
	\label{def:MMS}
	An allocation 
	$\mathcal{A} = (A_1,\ldots,A_n) $ is called an $\alpha$-\mms ($\alpha$-maximin share) allocation if $v_i(A_i)\geq \alpha\cdot \bmu_i\,$, for every $i\in N$.
\end{definition} 

Variations of maximin share fairness have also been proposed. Here we focus on two notable examples. 
The first one, \emph{pairwise maximin share fairness}, is related but not directly comparable to \mms and was introduced by \citet{CaragiannisKMPS19}. The idea is to demand an \mms-type guarantee but for any \emph{pair} of agents. That is, we can think of an agent $i$ as considering the combined bundle of herself and another agent and requesting to receive at least her maximin share of this bundle if split into two subsets.

\begin{definition}
	\label{def:PMMS}
	An allocation $\mathcal{A} = (A_1,\ldots,A_n) $ is called an $\alpha$-\pmms ($\alpha$-pairwise maximin share) allocation if for every pair of agents $i, j\in N$, 
	$v_i(A_i)\geq \alpha\cdot\bmu_i(2, A_i\cup A_j)$.
\end{definition} 

Taking this one step further, we can demand an allocation to have an \mms-type guarantee for \emph{any subset} of agents. This is referred to as \emph{groupwise maximin share fairness}, introduced by \citet{BBMN18}.

\begin{definition}
	\label{def:GMMS}
	An allocation $\mathcal{A} = (A_1,\ldots,A_n)$ is called an $\alpha$-\gmms ($\alpha$-groupwise maximin share) allocation if for every subset  of agents $N'\subseteq N$ and any agent $i\in N'$, 
	$v_i(A_i)\geq \alpha\cdot\bmu_i(|N'|, \cup_{j\in N'} A_j )$.
\end{definition}

In Definitions \ref{def:MMS}, \ref{def:PMMS}, and \ref{def:GMMS}, when $\alpha=1$, we refer to the corresponding allocations as \mms, \pmms, and \gmms allocations respectively.
It is clear that the notion of \gmms is stronger than both \mms and \pmms. Further, it has been observed that \ef is stronger than \gmms  \citep{BBMN18} and, when all values are positive, \pmms is stronger than \efx \citep{CaragiannisKMPS19}. 
It should be noted, however, that the approximate versions of  these notions are related in non-straightforward ways \citep{AmanatidisBM18} (see also \ref{app:connections}).

An example illustrating  the different fairness criteria can be found in \ref{app:example}.

\subsection{Known EF1 Algorithms}
\label{subsec:ef1_algs}

Among the fairness notions defined above, \efo is the only one for which we know that it can always be achieved. Furthermore, two simple algorithms are already known for computing such allocations in polynomial time. We state below a parametric version of these algorithms so that they can run for a limited number of steps or on a strict subset of the goods, as we are going to use them later as subroutines.

In order to define the \emph{envy-cycle-elimination algorithm} (Algorithm \ref{alg:ece}) of \citet{LMMS04}, we first need to introduce the notion of an \emph{envy graph}. Suppose we have a partial allocation $\mathcal{P} = (P_1,\ldots,P_n)$, i.e., an allocation of a strict subset of $M$. We define the directed envy graph $G_{\mathcal{P}} = (N, E_{\mathcal{P}})$, where $(i,j) \in E_{\mathcal{P}}$ if and only if agent $i$ currently envies agent $j$, i.e., $v_i(P_i)<v_i(P_j)$. Algorithm \ref{alg:ece} builds an allocation one good at a time; in each step, an agent that no one envies receives the next available good. To ensure that such an agent always exists, the algorithm identifies  cycles that are created in the envy graph and eliminates them by appropriately reallocating some of the current bundles.

\begin{algorithm}[ht]
	\DontPrintSemicolon 
		Construct the envy graph $G_{\mathcal{P}}$\;
		\For{every $g \in M'$ in lexicographic order}{
			\While{there is no node of in-degree 0 in $G_{\mathcal{P}}$ }{
				Find a cycle $j_1 \to j_2 \to \ldots \to j_r \to j_1$ in $G_{\mathcal{P}}$ \;
				$B = P_{j_1}$\;
				\For{$k=1$ to $r-1$}{
					$P_{j_k}=P_{j_{k+1}}$ \tcc*{{\small shift the bundles}}
				}
				$P_{j_r}=B$\;
				Update $G_{\mathcal{P}}$\;
			}
			Let $i\in N$ be a node of in-degree $0$ \label{line:source} \; $P_{i} = P_{i}\cup \{g\}$\;
			Update $G_{\mathcal{P}}$ \;
		}
		\Return $\mathcal{P}$ \; 
	\caption{Envy-Cycle-Elimination$(N, \mathcal{P}, M')$\newline {\small where $N$: set of agents, $\mathcal{P}$: initial partial allocation, $M'$: set of unallocated goods}} \label{alg:ece}
\end{algorithm} 

Regarding tie-breaking in line \ref{line:source} of the algorithm, we assume that agent $i$ is the lexicographically first node of $G_{\mathcal{P}}$ with in-degree 0. Below we summarize the main known properties of Algorithm \ref{alg:ece} that we will utilize in our analysis.

\begin{theorem}[Follows by \citet{LMMS04}]\label{thm:ece}
	Let $\mathcal{P}$ be any \efo partial allocation and $M'= M\mysetminus\cup_{i=1}^{n}P_i$. Then, 
	\begin{enumerate}[leftmargin=*,itemsep=3pt,topsep=2pt,parsep=0pt,partopsep=0pt,label=\rm{\alph*})]
		\item at the end of each iteration of the \emph{for} loop, the resulting partial allocation  is \efo. Hence, the algorithm terminates with an \efo allocation in polynomial time. This holds even for agents with general monotone valuation functions.
		\item Fix an agent $i$, and let $A_i$ be the bundle assigned to $i$ at the end of some iteration of the \emph{for} loop. If $A_i'$ is  assigned to $i$ at the end of a future iteration, then $v_i(A_i') \geq v_i(A_i)$.\label{fact:value}
	\end{enumerate}
\end{theorem}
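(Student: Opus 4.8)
The plan is to prove part (a) by induction on the iterations of the \emph{for} loop, maintaining the invariant that the current partial allocation is \efo, and to obtain part (b) from a single monotonicity observation: no agent's value for her own bundle ever decreases during the execution. The base case of part (a) is immediate, since the input $\mathcal{P}$ is \efo by hypothesis. For the inductive step I would analyze separately the two operations performed in one iteration, namely the (possibly repeated) cycle eliminations inside the \emph{while} loop and the assignment of the next good $g$ to a source of in-degree $0$.

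For a single cycle elimination, the two facts I would rely on are that (i) the rotation merely permutes the ownership of a fixed multiset of bundles, so every cross-value $v_i(P_j)$ is unchanged, and (ii) each agent $j_k$ on the cycle strictly improves, as she receives a bundle $P_{j_{k+1}}$ she envied, i.e. $v_{j_k}(P_{j_{k+1}}) > v_{j_k}(P_{j_k})$, while agents off the cycle keep their bundle. Given these, \efo is preserved: for any pair $i,j$ after the rotation the bundle $P'_j$ held by $j$ equals some pre-rotation bundle $P_{j'}$, and applying the \efo property of the pre-rotation allocation to the pair $(i,j')$ gives a good $h$ with $v_i(P'_i)\ge v_i(P_i)\ge v_i(P_{j'}\setminus\{h\}) = v_i(P'_j\setminus\{h\})$, where the middle inequality is exactly the ``no agent is worse off'' property (ii). This argument uses only the definition of envy and never additivity, so it carries over verbatim to general monotone valuations.

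For the good-assignment step, the source $i$ has in-degree $0$, hence $v_k(P_k)\ge v_k(P_i)$ for every $k$; removing the freshly added good $g$ from $P_i\cup\{g\}$ recovers precisely this inequality, yielding the \efo witness for each pair $(k,i)$. Pairs not involving $i$ are untouched, and $i$'s own envy can only weaken by monotonicity, so the allocation is \efo at the end of the iteration; after all $|M'|$ goods are placed the final allocation is \efo. The same two operations also show $v_i(P_i)$ is non-decreasing throughout the run --- strictly increasing when $i$ lies on an eliminated cycle, weakly increasing by monotonicity when $i$ receives a good --- which is exactly part (b).

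The step I expect to be the main obstacle is the polynomial running-time bound, i.e. controlling how many cycle eliminations can occur. My plan is to track the potential $\Phi = |E_{\mathcal{P}}|$, the number of envy edges: each good-assignment raises $\Phi$ by at most $n-1$, since only in-edges toward the source can appear, so the total increase over the execution is $O(mn)$. It then suffices to show each cycle elimination strictly decreases $\Phi$. Edges with both endpoints off the cycle are untouched, edges into the cycle are only re-permuted (so their count is unchanged), and edges leaving the cycle can only vanish; the delicate case is the edges internal to the cycle. Here I would show, for each cycle agent $j_k$, that the set of cycle-bundles she strictly prefers to her new bundle $P_{j_{k+1}}$ is a \emph{proper} subset of those she preferred to her old bundle $P_{j_k}$ --- proper because the cycle edge $j_k\to j_{k+1}$ itself disappears while, thanks to $v_{j_k}(P_{j_{k+1}})>v_{j_k}(P_{j_k})$, no new internal out-edge can arise. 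Summing over the cycle yields a strict drop in $\Phi$, which bounds the number of eliminations by $O(mn)$ and hence the total work by a polynomial. Verifying this internal-edge count is the crux of the running-time argument.
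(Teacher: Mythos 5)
Your argument is correct and is essentially the standard proof of \citet{LMMS04}, which the paper cites without reproducing: induction showing that cycle rotations permute a fixed multiset of bundles while only improving agents (so any \efo{} witness transfers), that the newly placed good itself serves as the witness for the source agent, and a potential argument on the number of envy edges (increase of at most $n-1$ per placed good, strict decrease per cycle elimination) for the polynomial bound. No gaps; the internal-edge count you flag as the crux goes through exactly as you sketch, since each cycle agent's threshold strictly rises while the multiset of cycle bundles is unchanged.
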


The first property of Theorem \ref{thm:ece} simply says that the \efo property is maintained during the execution of the algorithm, given an initial \efo allocation. The second property states that agents only get happier throughout the course of the algorithm, since they keep getting better and better bundles. 

For additive valuation functions there is, in fact, an even simpler greedy algorithm, referred to in the literature as the \emph{round-robin algorithm}, or the \emph{draft algorithm} (Algorithm \ref{alg:rr}) that also outputs \efo allocations, see e.g., \citep{Markakis17-survey}. 
Given a fixed ordering of the agents, they simply pick their favorite unallocated good one by one, according to that ordering, until there are no goods left.

\begin{algorithm}[ht]
	\DontPrintSemicolon 
		$k=1$ \;
		\While{$M'\neq \emptyset$ and $\tau>0$}{
			$g =  \argmax_{h\in M'} v_{\ell[k]}(h)$ \;
			$P_{\ell[k]} = P_{\ell[k]}\cup \{g\}$ \;
			$M' = M' \mysetminus \{g\}$\;
			$k = k+1 \mod n$ \;
			$\tau = \tau -1$ \; 
		}
		\Return $\left( \mathcal{P}, M'\right) $ \; 
	\caption{Round-Robin$(N, \mathcal{P}, M', \ell, \tau)$ \newline
		{\small where $N$: set of agents, $\mathcal{P}$: partial allocation, $M'$: set of unallocated goods, $\ell$: an ordering of $N$, $\tau$: number of steps}} \label{alg:rr}
\end{algorithm} 

\begin{theorem}\label{thm:rr}
	Let $\ell$ be any ordering of $N$ and $\mathcal{P}_\emptyset= (\emptyset,\ldots,\emptyset)$. Then Algorithm \ref{alg:rr} with input $(N, \mathcal{P}_\emptyset,\allowbreak M, \ell, |M|)$ produces an \efo allocation in polynomial time. 
\end{theorem}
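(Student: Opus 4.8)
The plan is to dispatch the running-time claim immediately and then spend the effort on the \efo guarantee. For the running time, note that the \emph{while} loop of Algorithm \ref{alg:rr} executes exactly $|M|$ times: it starts with $\tau = |M|$, decrements $\tau$ and removes one good from $M'$ each iteration, and each iteration only computes a single $\argmax$ over at most $|M|$ remaining goods. Hence the algorithm runs in $O(|M|^2)$ time, which is polynomial. For the fairness claim, I would fix an arbitrary ordered pair of agents $i, j \in N$ and bound the envy of $i$ towards $j$. Let $a_1, a_2, \ldots, a_q$ be the goods received by $i$, listed in the order they were picked, and let $b_1, b_2, \ldots, b_p$ be those of $j$, likewise ordered. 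The entire argument rests on one monotonicity observation: when an agent picks, she takes a good of maximum value to her among all currently available goods, so each of her picks is at least as valuable to her as \emph{any} good picked at a later step.

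In the first case, $i$ precedes $j$ in the ordering $\ell$, so in every round $i$ picks before $j$; in particular $i$ selects $a_k$ while $b_k$ is still available, whence $v_i(a_k) \ge v_i(b_k)$. Since $i$ picks before $j$ in each round, $i$ never trails $j$ in the number of goods, i.e.\ $q \ge p$. Summing over $k = 1, \ldots, p$ and using additivity,
\[
v_i(A_i) = \sum_{k=1}^{q} v_i(a_k) \ge \sum_{k=1}^{p} v_i(a_k) \ge \sum_{k=1}^{p} v_i(b_k) = v_i(A_j),
\]
so $i$ does not envy $j$ at all, which is stronger than what \efo requires.

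In the second case, $j$ precedes $i$, so $j$ picks first in each round and may acquire one extra good. Here I would discard $j$'s very first pick $b_1$ and shift the pairing by one round, matching $a_k$ with $b_{k+1}$. Because $i$ picks $a_k$ in round $k$ strictly before $j$ picks $b_{k+1}$ in round $k+1$, the good $b_{k+1}$ is available when $a_k$ is chosen, giving $v_i(a_k) \ge v_i(b_{k+1})$. Summing these inequalities over the $p-1$ goods $b_2, \ldots, b_p$ yields
\[
v_i(A_i) \ge \sum_{k=1}^{p-1} v_i(a_k) \ge \sum_{k=2}^{p} v_i(b_k) = v_i(A_j \setminus \{b_1\}),
\]
which is exactly the \efo condition for the pair $(i,j)$ with witness good $g = b_1$.

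The only genuinely delicate point is the bookkeeping around the final, possibly incomplete, round: one must confirm that every good of $j$ appearing on the right-hand side is matched to a distinct good of $i$ picked earlier. In the first case this is the inequality $q \ge p$, and in the second it amounts to $p - 1 \le q$. Both follow from the fact that the agents pick strictly in the cyclic order $\ell$, so after any prefix of picks the numbers of goods held by any two agents differ by at most one; in particular, whichever of $i, j$ precedes the other receives at most one more good. Everything else is a direct application of additivity together with the greedy choice rule, so I expect no further obstacles.
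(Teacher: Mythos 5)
Your proof is correct. Note, however, that the paper does not actually prove Theorem~\ref{thm:rr}: it states it as a known fact for additive valuations and points to the literature (e.g., the survey of \citet{Markakis17-survey}), so there is no in-paper argument to compare against. What you have written is the standard folklore proof, and it is sound: the case split on whether $i$ precedes $j$ in $\ell$, the round-by-round matching $a_k \mapsto b_k$ (respectively $a_k \mapsto b_{k+1}$ after discarding $j$'s first pick), the greedy-choice inequality $v_i(a_k)\ge v_i(b_k)$ (resp.\ $v_i(a_k)\ge v_i(b_{k+1})$), and the counting facts $q\ge p$ (resp.\ $p-1\le q$) for the possibly incomplete last round are all exactly the right ingredients, and additivity is used precisely where it is needed. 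The $O(|M|^2)$ running-time bound is also fine.
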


\section{A Simple Universally Fair Algorithm}
\label{sec:EFX}
As mentioned above, Algorithm \ref{alg:efx} is built on Algorithms \ref{alg:ece} and \ref{alg:rr}. In particular, it first runs a simple preprocessing step (Algorithm \ref{alg:preprocess}) that determines an appropriate ordering $\ell$ of the set of agents $N$. Then, it suffices to run only two rounds of the round-robin algorithm, once with respect to $\ell$ and once with respect to the reverse of $\ell$ (the second run is also restricted to a subset of the agents), and finally run the envy-cycle-elimination algorithm on the remaining instance.
It should be noted here that the preprocessing step is mostly introduced to facilitate the presentation and the analysis of the algorithm. As it can be seen by its description, Algorithm \ref{alg:preprocess} could be combined with the first run of the round-robin algorithm. 
Indeed, the final assignments for the $h_i$s in Algorithm \ref{alg:preprocess} are exactly the goods that the agents receive in line \ref{line:1st_rr} of Algorithm \ref{alg:efx} (see also Lemma \ref{lem:inequalities} in the next section).

\begin{algorithm}[ht]
	\DontPrintSemicolon 
		$(\ell, n') = \text{Preprocessing}(N, M)$ \;
		Let $\mathcal{A} = (A_1,\ldots,A_n)$ with $A_i=\emptyset$ for each $i\in N$ \;
		$\left( \mathcal{A}, M'\right) =  \text{Round-Robin}(N, \mathcal{A}, M, \ell, n)$ \label{line:1st_rr}\;
		$\ell^R = (\ell[n],\ell[n-1],\ldots,\ell[1])$\;
		$\left( \mathcal{A}, M'\right) =  \text{Round-Robin}(N, \mathcal{A}, M', \ell^R, n-n')$ \label{line:2nd_rr}\;
		$\mathcal{A} =  \text{Envy-Cycle-Elimination}(N, \mathcal{A}, M')$ \label{line:ece}\;
		\Return $\mathcal{A}$ \; 
	\caption{Draft-and-Eliminate$(N, M)$} \label{alg:efx}
\end{algorithm}

As it is customary, we use $\phi$ to denote the \emph{golden ratio}. That is, $\phi$ is the positive solution to the quadratic equation $x^2-x-1 = 0$. Recall that $\phi = \frac{1+\sqrt{5}}{2}\approx 1.618$ and that $\phi - 1 = \phi^{-1} \approx 0.618$.

\begin{algorithm}[ht]
	\DontPrintSemicolon 
		$L = \emptyset$;\ \  
		$A = N$;\ \  
		$k=1$ \;
		\While{$A\neq \emptyset$}{
			Let $i$ be the lexicographically first agent of $A$ \;
			$h_i =  \argmax_{g\in M} v_{i}(g)$ \label{line:h_i_def}\;
			$t_i = m-|M|+1$ \tcc*{{\small $i$'s timestamp}} \label{line:timestamp}
			Let $R = (N\mysetminus (A \cup L ))\cup \{i\}$\; 
			$j =  \argmax_{t\in R} v_{i}(h_t)$ \;
			\uIf{$\phi\cdot v_i(h_i) < v_i(h_j)$ \label{line:if_preproc}}{
				$h_i = h_j$ \label{line:h_i_ass}\;
				$L = L\cup \{i\}$ \label{line:L}\;
				$\ell[k] = i$\;
				$k = k+1$\;
				$A = (A\mysetminus \{i\})\cup \{j\}$\;
			}\Else{
				$A = A\mysetminus \{i\}$\;
				$M = M\mysetminus \{h_i\}$ \; 
			}
		}
		\For{every $i \in N\mysetminus L$ in order of increasing timestamp $t_i$}{
			$\ell[k] = i$\;
			$k = k+1$\;	
		}
		\Return $(\ell, |L|)$ \; 
	\caption{Preprocessing$(N, M)$} \label{alg:preprocess}
\end{algorithm}

Before we move to the analysis of our algorithm, it is useful to build some more intuition on how things work. The preprocessing part essentially reorders $N$ so that the first few agents (namely, the first $|L|$ agents) are \emph{quite happy} with their pick in the first round of the round-robin subroutine. For the remaining agents, we make sure that they get a second good before we move to the envy-cycle-elimination algorithm. To do so in a ``balanced'' way, these agents pick goods in reverse order. The resulting partial allocation, where everyone receives one or two goods, turns out to have all the fairness properties we want to achieve at the end, e.g., it is $(\phi -1)$-\efx with respect to the currently allocated goods. Crucially, we show that starting from there and then applying the envy-cycle-elimination algorithm on the remaining instance 
maintains these properties.

Coming back to the preprocessing part, the intuition is to simulate a first round of Algorithm \ref{alg:rr} and correct any occurrences of extreme envy. In particular, if an agent envies someone that chose before her by a factor greater than $\phi$, then she is moved to a position of high priority in the ordering that is created. The agents moved to the first positions during this process (i.e., agents in $L$) are guaranteed a good of high value in line \ref{line:1st_rr} of Algorithm \ref{alg:efx}. To counterbalance their advantage, they are not allowed to pick a second good later in line \ref{line:2nd_rr}. 

To see that Algorithm \ref{alg:efx} runs in polynomial time, given the properties we have seen for Algorithms \ref{alg:ece} and \ref{alg:rr}, it suffices to  check that the preprocessing step can be efficiently implemented. Indeed, the \emph{if} branch of the while loop in Algorithm \ref{alg:preprocess} may be executed at most $n$ times, since agents are irrevocably added to $L$. Similarly, the \emph{else} branch may be executed at most $n$ times, as each time the set $A$ becomes smaller and its size never increases in the other parts of the algorithm. 

\section{Fairness Guarantees of Algorithm \ref{alg:efx}}
\label{sec:guarantees}
We begin our analysis with two useful lemmata about Algorithm \ref{alg:preprocess}. We stress that within Algorithm \ref{alg:preprocess}, every agent $i$ is associated with a distinct good $h_i$, although nothing is allocated at this step. The first lemma  establishes some useful inequalities regarding the goods associated with the agents. The second lemma states that this association actually coincides with the partial allocation produced in line \ref{line:1st_rr} of Algorithm \ref{alg:efx}. 

Recall that the set $L$, defined in Algorithm \ref{alg:preprocess}, contains the agents that get to pick first in line \ref{line:1st_rr} of Algorithm \ref{alg:efx} at the expense of not choosing a second good in line \ref{line:2nd_rr}. In terms of Algorithm \ref{alg:efx}, $L = \{\ell[1], \ell[2],\ldots,\ell[n']\}$). The partition of $N$ into $L$ and $N\mysetminus L$ is pivotal for distinguishing the different cases that are relevant in the analysis. 

\begin{lemma}\label{lem:inequalities}
	At the end of the execution of Algorithm \ref{alg:preprocess} with input $(N, M)$, each agent $i$ is associated with a single good $h_i$, so that
	\begin{enumerate}[leftmargin=*,itemsep=3pt,topsep=2pt,parsep=0pt,partopsep=0pt,label=\rm{\alph*})]
		\item $v_i(h_i)> \phi \cdot v_i(g)$, for any $i \in L$ and $g\in M\mysetminus \cup_{k=1}^{n}\{h_k\}$,\label{lem:ineq_1}
		\item $\phi \cdot v_i(h_i) \ge  v_i(h_j)$, for any $i, j \in N\mysetminus L$. \label{lem:ineq_2}
	\end{enumerate}
\end{lemma}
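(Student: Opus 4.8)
The plan is to first pin down the bookkeeping of Algorithm~\ref{alg:preprocess}, since the real difficulty is not the inequalities but the fact that both the association $i \mapsto h_i$ and the working set $M$ evolve during the execution. I would write $M_0$ for the input set of goods and record two monotonicity facts that are immediate from the pseudocode: a good leaves $M$ \emph{only} in the \emph{else} branch (via $M = M \setminus \{h_i\}$), so $M$ only shrinks and never regrows; and an agent leaves $A$ permanently only when it is placed in $L$ (in the \emph{if} branch the stolen-from agent $j$ is re-added to $A$, but an $L$-agent is never re-added). The subtlety I must respect throughout is that an agent finalized through the \emph{else} branch can later have its good stolen by an agent entering $L$, which re-adds it to $A$ and lets it pick again; hence ``the good of $j$'' is time-dependent and I always mean the \emph{final} $h_j$.

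Next I would prove the ``single good / distinctness'' claim together with the identification of the leftover goods via a loop invariant. Writing $D = M_0 \setminus M$ for the removed goods, I claim that at the start of every while-iteration the map $t \mapsto h_t$, restricted to $N \setminus A$, is a bijection onto $D$. This is maintained step by step: the \emph{else} branch adds the fresh good $h_i \in M$ to $D$ and moves $i$ into $N \setminus A$, extending the bijection by one pair; the \emph{if} branch merely reassigns the already-removed good $h_j \in D$ from $j$ (which re-enters $A$) to $i$ (which enters $L$), leaving $D$ and the bijection intact. At termination $A = \emptyset$, so $t \mapsto h_t$ is a bijection from $N$ onto $D$: the $h_k$ are distinct, each agent has exactly one, and $M_0 \setminus \cup_{k=1}^n \{h_k\}$ equals the final working set $M_{\mathrm{fin}}$. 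Combined with monotonicity, this yields the fact I will lean on repeatedly: every leftover good $g \in M_0 \setminus \cup_k \{h_k\}$ lies in the working set $M$ at \emph{every} point of the execution.

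For part~\ref{lem:ineq_1} I argue locally at the moment $i$ enters $L$: there the \emph{if} test holds, so with $h_i^{\mathrm{old}} = \argmax_{g' \in M} v_i(g')$ and the new value $h_i = h_j$ we get $v_i(h_i) = v_i(h_j) > \phi\cdot v_i(h_i^{\mathrm{old}})$; since every leftover good $g$ is in $M$ at that instant, $v_i(h_i^{\mathrm{old}}) \ge v_i(g)$, and $h_i$ is final because $i$ is never reprocessed, giving $v_i(h_i) > \phi\cdot v_i(g)$. For part~\ref{lem:ineq_2} I fix $i,j \in N \setminus L$ and split on the order of their final \emph{else}-steps. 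If $j$ is finalized no later than $i$, then at $i$'s final \emph{else}-step $j \in N \setminus (A \cup L)$, so $j \in R$, and the failure of the \emph{if} test gives $\phi\cdot v_i(h_i) \ge \max_{t \in R} v_i(h_t) \ge v_i(h_j)$. If $j$ is finalized strictly later, then $h_j$ is not yet removed at $i$'s final \emph{else}-step, so $h_j \in M$ there and $v_i(h_i) = \max_{g \in M} v_i(g) \ge v_i(h_j)$, whence $\phi\cdot v_i(h_i) \ge v_i(h_j)$ as $\phi > 1$ (with $i=j$ trivial).

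I expect the main obstacle to be the bookkeeping of the second paragraph rather than the arithmetic: one must be disciplined about goods and agents being ``recycled,'' so that the clean statements ``$h_j \in D$ at the if-step'' and ``$h_j \in M$ until $j$'s final else-step'' refer to the correct \emph{final} meaning of $h_j$. Once the bijection invariant and the containment of leftover goods in every working set are established, both inequalities drop out of a single inspection of the if/else test, and the only genuine decision is the case split on finalization order in part~\ref{lem:ineq_2}.
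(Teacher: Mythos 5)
Your proof is correct and follows essentially the same route as the paper's: part (a) via the \emph{if}-test at the moment $i$ enters $L$ combined with the monotone shrinking of the working set, and part (b) via a case split on which of $i$, $j$ has its final \emph{else}-step last (the paper phrases the second case as a contradiction, you argue it directly). The explicit bijection invariant you add is sound bookkeeping that the paper leaves implicit, but it does not change the substance of the argument.
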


\begin{proof}
	a) Fix some $i\in L$ and consider the last iteration of the \emph{while} loop where $i$ was the lexicographically first agent of $A$, during which $i$ was added to $L$. We make the distinction between the initial and the final good associated with $i$ during this iteration by using $h_i^\text{old}$ and $h_i$ respectively to denote them. So, initially, $i$ was associated with $h_i^\text{old}$ which was her favorite good at the time, among the available ones. Since $i$ was eventually added to $L$, we know that the condition in line \ref{line:if_preproc} was true. That is, $i$'s favorite good among the ones associated to an agent not in $L$, say $h_j$, was more than $\phi$ times more valuable than $h_i^\text{old}$. By the choice of $h_i^\text{old}$, we have $v_i(h_j) > \phi\cdot v_i(h_i^\text{old}) \ge  \phi\cdot v_i(g)$ for any good $g$ that was not associated to an agent at the time. Note that the set of unassociated goods during the execution of the algorithm only shrinks, and thus the last inequality also holds for any good $g$ that was not associated to any agent till the end. Finally, recall that $h_i= h_j$, as imposed by line \ref{line:h_i_ass},  to conclude that $v_i(h_i)  >  \phi\cdot v_i(g)$ for any $g\in M\mysetminus \cup_{k=1}^{n}\{h_k\}$.\smallskip
	
	b) Fix some $i, j \in N\mysetminus L$. Note that both $i$ and $j$ may be considered multiple times during Algorithm \ref{alg:preprocess}, as they may be removed and then added back to the set $A$ several times. We consider two cases, based on the \emph{last} time that each agent was considered (i.e., the last time each of them was the lexicographically first agent in $A$). If the last time that $i$ was considered by Algorithm \ref{alg:preprocess} happened before the last time that $j$ was considered (i.e., $t_i<t_j$ at the end), the desired inequality is straightforward, as agent $i$ is associated with her favorite good among the available goods, $h_i$, before agent $j$, i.e., $v_i(h_i) \ge  v_i(h_j)$. On the other hand, if the last time $i$ was considered by Algorithm \ref{alg:preprocess} takes place after the last time that $j$ was considered (i.e., $t_i>t_j$ at the end), suppose that $\phi \cdot v_i(h_i) <  v_i(h_j)$. Then, during the last iteration that $i$ was considered, line \ref{line:if_preproc} would be true and $i$ would be (irrevocably) added to $L$ in line \ref{line:L}, which contradicts the choice of $i$. 
\end{proof}

In order for the above lemma to be of any use, we need a connection between the $h_i$s and the partial allocations that are produced in the first part of Algorithm \ref{alg:efx} (lines \ref{line:1st_rr}-\ref{line:2nd_rr}). At a first glance, the issue  is that the order in which the goods are assigned in Preprocessing$(N, M)$ is somewhat different than the order in which the goods are allocated in Round-Robin$(N, \mathcal{A}, M, \ell, n)$. Next we establish this connection.

\begin{lemma}\label{lem:1st_rr}
	The partial allocation produced in line \ref{line:1st_rr} of Algorithm \ref{alg:efx} is $\mathcal{A} = (\{h_{1}\},\{h_{2}\}, \allowbreak \ldots, \allowbreak \{h_{n}\})$, where the $h_i$s are as in Lemma \ref{lem:inequalities}.
\end{lemma}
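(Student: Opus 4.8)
The statement asserts that the single-good assignment $i \mapsto h_i$ computed inside Preprocessing coincides, as a partial allocation, with the output of the first round-robin call on the full item set $M$ using the ordering $\ell$. The natural approach is to track, in parallel, what each agent picks in Round-Robin against what $h_i$ ends up being in Preprocessing, and argue they agree good-by-good. First I would recall the structure of $\ell$: its first $|L|=n'$ positions are exactly the agents of $L$, in the order they were added to $L$ (line \ref{line:L}), and the remaining positions are the agents of $N \mysetminus L$ ordered by increasing timestamp $t_i$ (the final \emph{for} loop). Since Round-Robin$(N,\mathcal{A},M,\ell,n)$ runs for exactly $n$ steps, each agent picks exactly once, so the claim reduces to showing that agent $\ell[k]$, when it is her turn, finds $h_{\ell[k]}$ still available and that it is her favorite among the goods available at that moment.

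The key step is an induction on the position $k$ in the ordering $\ell$, maintaining the invariant that after $k$ steps of Round-Robin the goods allocated are exactly $\{h_{\ell[1]},\ldots,h_{\ell[k]}\}$. For the inductive step I would split on whether $\ell[k] \in L$ or $\ell[k] \in N \mysetminus L$. For an agent $i \in L$, part \ref{lem:ineq_1} of Lemma \ref{lem:inequalities} gives $v_i(h_i) > \phi \cdot v_i(g)$ for every $g \in M \mysetminus \cup_k \{h_k\}$, and in particular $v_i(h_i) > v_i(g)$ for every such $g$; combined with the induction hypothesis that only other $h_{\ell[t]}$'s have been removed, this forces $h_i$ to be $i$'s unique favorite among the goods still available, so Round-Robin indeed picks $h_i$. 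For an agent $i \in N \mysetminus L$, the relevant fact is that $h_i$ was $i$'s favorite good among those \emph{unassociated} at the last time $i$ was considered in Preprocessing, and that $i$'s timestamp orders her correctly relative to the other $N \mysetminus L$ agents; I would argue that precisely the goods $h_{\ell[1]},\ldots,h_{\ell[k-1]}$ (which are either $L$-goods or earlier-timestamp $N\mysetminus L$-goods) have already been taken, matching the set of goods that were unavailable to $i$ at her defining moment, so $h_i$ is again her current favorite.

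The main obstacle I expect is the bookkeeping around agents of $N \mysetminus L$ whose associated good changed over the course of Preprocessing, because such agents can be removed from and re-added to $A$ several times, and one must verify that the \emph{final} $h_i$ (determined at the last time $i$ was considered) is exactly the good a fresh round-robin pass would select, given that the order of allocation in Preprocessing is not the order of $\ell$. The crux is to show that the set of goods that are ``gone'' when $i$ picks in Round-Robin equals the set of goods $i$ perceived as already associated at her last Preprocessing step --- this is where the timestamp ordering in the final \emph{for} loop does the work, and where I would need to be careful that no good associated to an $L$-agent at \emph{that} moment but later reassigned creates a discrepancy. Handling this cleanly, rather than the two favorite-good arguments (which are direct consequences of Lemma \ref{lem:inequalities} and the definition of $h_i$), is the delicate part of the proof.
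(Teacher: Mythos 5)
Your overall plan---induction on the position $k$ in $\ell$ with the invariant that after $k$ steps of Round-Robin exactly $\{h_{\ell[1]},\ldots,h_{\ell[k]}\}$ has been allocated, and a case split between $L$ and $N\mysetminus L$---is the same as the paper's, but two steps are not carried out, and one of them would fail as written. For $i=\ell[k]\in L$ you justify that $h_i$ is $i$'s favorite among the still-available goods by part \ref{lem:ineq_1} of Lemma \ref{lem:inequalities}. That inequality only compares $h_i$ against goods in $M\mysetminus\cup_{t=1}^{n}\{h_t\}$, i.e., goods associated with \emph{no} agent at the end of Preprocessing. The available set at step $k$, however, also contains $h_{\ell[t]}$ for every $t>k$---in particular the goods of all agents in $N\mysetminus L$---and part \ref{lem:ineq_1} says nothing about how $i$ values those. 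The missing piece is to return to how $h_i$ is chosen in Algorithm \ref{alg:preprocess}: in the iteration where $i$ joins $L$, $h_i$ maximizes $v_i$ over the goods of all agents finalized so far together with $i$'s favorite good of the current $M$, and every final $h_{\ell[t]}$ with $t>k$ lies in that union (it is either already associated to a finalized agent at that moment or still unassociated); hence $v_i(h_i)\ge v_i(h_{\ell[t]})$. This is how the paper closes the case.

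For $i\in N\mysetminus L$ you assert that the goods taken before $i$'s turn in Round-Robin ``match'' the goods unavailable to $i$ at her last Preprocessing iteration. They do not match: Round-Robin has already removed the goods of \emph{all} agents in $L$, whereas at $i$'s defining moment only some of those had been removed (agents can join $L$ after $i$ is last considered, and goods are reassigned within the process afterwards). You correctly flag this as the delicate point but leave it unresolved. The resolution is a one-directional containment rather than an equality of sets: the extra goods excluded in the Round-Robin scenario are final $L$-goods that were still available to $i$ at her defining moment, so they were among her options when she settled on $h_i$ and she did not prefer them to $h_i$; removing options an agent does not prefer cannot change her argmax, so she still picks $h_i$. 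With these two repairs your argument coincides with the paper's proof.
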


\begin{proof}
	We start with the easy observation that the ordering $\ell$ that is used in the Round-Robin subroutine in line \ref{line:1st_rr} of Algorithm \ref{alg:efx} \emph{is not} the same with the order that goods get assigned to agents during the preprocessing step, even when one takes into account that moving agents into $L$ is similar to changing their order. 
	
	First, using induction, we are going to show that agents in $L$ get assigned to them (in Algorithm \ref{alg:preprocess}) the same goods they would if they were to choose first (in Algorithm \ref{alg:rr}) according to $\ell$. For agent $\ell[1]$ this is straightforward, since she gets her favorite good in both cases. So, assume for our inductive step that agents $\ell[1],\ldots,\ell[k-1]$ did receive goods $h_{\ell[1]},\ldots,h_{\ell[k-1]}$ respectively in Algorithm \ref{alg:rr}. We argue that agent $\ell[k]$'s favorite available good from $M_k = \allowbreak M\mysetminus \cup_{i=1}^{k-1}\{h_{\ell[i]}\}$ is $h_{\ell[k]}$.  
	Indeed, in the execution of Algorithm \ref{alg:preprocess}, consider the last iteration of the \emph{while} loop where $\ell[k]$ was the lexicographically first agent of $A$. At the time, $L$ was $\{\ell[1],\ldots,\ell[k-1]\}$ and in lines \ref{line:h_i_def}-\ref{line:h_i_ass} agent $\ell[k]$ gets assigned $h_{\ell[k]}$ which is her favorite good among the ones assigned outside $L$ exactly because this was much better than her favorite unassigned good. That is, (the final choice for) $h_{\ell[k]}$ is $\ell[k]$'s favorite good from $M_k$. This concludes the inductive step.

	Given the above, it is not hard to argue about agents in $N\mysetminus L$. 
	First, observe that, on termination, agents in $N\mysetminus L$ all have distinct timestamps assigned in line \ref{line:timestamp}. Although this is not true for all agents, an agent in $N\mysetminus L$ only gets an existing timestamp, 
	if this belongs to one or more agents already in $L$. For the remainder of the proof, by \emph{timestamp} of an agent $i$, we mean the \emph{final value of $t_i$}.
	Now, fix some $i\in N\mysetminus L$. 
	In Algorithm \ref{alg:preprocess} agent $i$ gets assigned her favorite good available if we exclude the goods assigned to \emph{some} agents in $L$, and to all the agents in $N\mysetminus L$ with timestamp less than $t_i$ (\emph{first scenario}). In Algorithm \ref{alg:rr}, agent $i$ receives her favorite good available if we exclude the goods allocated to \emph{all} the agents in $L$, and to all the agents in $N\mysetminus L$ with timestamp less than $t_i$ (\emph{second scenario}). However, in the second scenario, the extra agents from $L$ that pick before $i$ choose goods that $i$ does not find attractive enough. Indeed, those goods are available in the first scenario (where they are the only extra available options compared to the second scenario), yet $i$ does not prefer them to $h_i$. Therefore, in the second scenario $i$ also picks $h_i$ and $A_i = \{h_i\}$ after line \ref{line:1st_rr} of Algorithm \ref{alg:efx}.
\end{proof}

Given Lemmata \ref{lem:inequalities} and \ref{lem:1st_rr}, we are going to consistently use the $h_i$ notation for the goods allocated in line \ref{line:1st_rr} of Algorithm \ref{alg:efx}, throughout the remaining of this section. Further, for the agents who receive a second good in line \ref{line:2nd_rr} of Algorithm \ref{alg:efx} we use $h_i'$ to denote that second good of agent $i$.

As a warm-up we first obtain that Algorithm \ref{alg:efx} maintains the fairness guarantee of its components, i.e., \efo fairness.

\begin{proposition}\label{prop:efo}
	Algorithm \ref{alg:efx} returns an \efo allocation.
\end{proposition}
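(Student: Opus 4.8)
The plan is to reduce everything to the guarantee of Theorem~\ref{thm:ece}. That theorem tells us that envy-cycle-elimination, started from any \efo partial allocation, terminates with a complete \efo allocation. Since line~\ref{line:ece} of Algorithm~\ref{alg:efx} is exactly a call to envy-cycle-elimination on whatever partial allocation the two round-robin rounds produce, it suffices to prove that the partial allocation $\mathcal{A}$ obtained right after line~\ref{line:2nd_rr} is already \efo; everything downstream is then handled by Theorem~\ref{thm:ece}(a).

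So the core of the argument is to verify the \efo property of this two-round partial allocation. First I would invoke Lemma~\ref{lem:1st_rr} to pin down the bundles: after line~\ref{line:1st_rr} agent $i$ holds exactly $\{h_i\}$, and after line~\ref{line:2nd_rr} each agent $i \in L$ still holds $\{h_i\}$ while each agent $i \in N \mysetminus L$ holds $\{h_i, h_i'\}$, where $h_i'$ is the good picked in the reversed round-robin. (Degenerate situations in which goods run out, so that some agent ends up with at most one good, make \efo trivial for that agent and can be dispatched separately.) I would then check the condition $v_i(A_i) \ge v_i(A_j \mysetminus \{g\})$ for a suitably chosen $g \in A_j$, splitting into cases according to the partition of $N$ into $L$ and $N \mysetminus L$. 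The easy cases are when the envied agent $j$ lies in $L$: then $A_j = \{h_j\}$ is a singleton, so dropping $h_j$ leaves value $0$ and \efo holds for free. When $i \in L$ and $j \in N \mysetminus L$, I would drop $g = h_j$; since $h_j'$ is a second-round good it lies in $M \mysetminus \cup_k \{h_k\}$, so Lemma~\ref{lem:inequalities}(a) gives $v_i(h_i) > \phi \cdot v_i(h_j') > v_i(h_j')$, that is, $v_i(A_i) > v_i(A_j \mysetminus \{h_j\})$.

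The only genuinely nontrivial case, and the main obstacle, is a pair $i, j \in N \mysetminus L$, where the interaction between the forward round-robin of line~\ref{line:1st_rr} and the \emph{restricted, reversed} round-robin of line~\ref{line:2nd_rr} has to be understood. Here I would exploit the snake structure: writing $a_p, a_q \in N \mysetminus L$ with $a_p$ preceding $a_q$ in $\ell$, agent $a_p$ picks before $a_q$ in the first round while $a_q$ picks before $a_p$ in the second (reversed) round. Because an agent always takes her favorite still-available good, this yields the two inequalities $v_{a_p}(h_{a_p}) \ge v_{a_p}(h_{a_q})$ and $v_{a_q}(h_{a_q}') \ge v_{a_q}(h_{a_p}')$. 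To show $a_p$ does not envy $a_q$ up to one good, drop $h_{a_q}'$, leaving $v_{a_p}(h_{a_q}) \le v_{a_p}(h_{a_p}) \le v_{a_p}(A_{a_p})$; symmetrically, dropping $h_{a_p}$ gives $v_{a_q}(h_{a_p}') \le v_{a_q}(h_{a_q}') \le v_{a_q}(A_{a_q})$.

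The subtlety to watch in this last case is that the second-round availability argument must account for the fact that only the $N \mysetminus L$ agents participate and that they do so in reverse $\ell$-order, so one has to confirm that the good picked later in round two really was available to the earlier picker; once this is checked, the restriction to $N \mysetminus L$ causes no trouble. Having established that the allocation after line~\ref{line:2nd_rr} is \efo, applying Theorem~\ref{thm:ece}(a) to the envy-cycle-elimination step closes the proof.
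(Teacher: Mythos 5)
Your proof is correct and follows essentially the same route as the paper: reduce via Theorem~\ref{thm:ece}(a) to showing the partial allocation after line~\ref{line:2nd_rr} is \efo, then split on whether the envied agent lies in $L$. The only difference is that for $j \in N \mysetminus L$ the paper uses a single uniform one-liner (drop $h_j$; since the entire second round happens after the first, $h_j'$ was still available when any agent $i$ picked $h_i$, so $v_i(h_i) \ge v_i(h_j')$), which subsumes your two sub-cases on whether $i \in L$ and makes Lemma~\ref{lem:inequalities}(a) and the snake argument unnecessary here.
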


\begin{proof}
	By Theorem \ref{thm:ece}, it suffices to show that the partial allocation $\mathcal{A} = (A_1, \ldots, A_n)$ produced in line \ref{line:2nd_rr} of Algorithm \ref{alg:efx} is \efo.
	Fix two distinct agents $i, j \in N$. If $j\in L$, then $A_j = \{h_j\}$. Clearly, $v_i(A_i) \ge v_i(A_j\mysetminus \{h_j\}) = 0$. On the other hand, if $j\in N\mysetminus L$, then $A_j = \{h_j, h_j'\}$. Since agent $i$ picked $h_i$ when $h_j'$ was still available, $v_i(h_i) \ge v_i(h_j')$. So, we have $v_i(A_i) \ge v_i(h_i) \ge v_i(A_j\mysetminus \{h_j\})$.
\end{proof}

\subsection{Envy-Freeness up to Any Good}
\label{sec:efx}
Proving whether
\efx allocations always exist or not seems very challenging \citep{PR18,CaragiannisGH19,ChaudhuryGM2020}. Even improving on the $1/2$ approximation factor of \citet{PR18} has been one of the most intriguing recent open problems in fair division. In this sense, we view the following as one of the highlights of this work. 

\begin{theorem}\label{thm:efx}
	The allocation $\mathcal{A} = (A_1, \ldots, A_n)$ returned by Algorithm \ref{alg:efx} is a $(\phi - 1)$-\efx allocation.
\end{theorem}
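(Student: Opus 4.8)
The plan is to maintain the invariant that the current (partial) allocation is $(\phi-1)$-\efx throughout Algorithm \ref{alg:efx}, in two stages: first for the partial allocation produced in line \ref{line:2nd_rr} (before envy-cycle-elimination), and then showing that every operation of Envy-Cycle-Elimination preserves $(\phi-1)$-\efx. The workhorse observation is that every good touched after line \ref{line:1st_rr}---i.e. every good that is not one of the $h_k$---was still available when each agent $i$ made her first round-robin pick, so $v_i(h_i)\ge v_i(g)$ for every such ``leftover'' good $g$; for $i\in L$, Lemma \ref{lem:inequalities}\ref{lem:ineq_1} strengthens this to $v_i(h_i)>\phi\, v_i(g)$, and for $i\in N\setminus L$ the second pick also dominates leftovers, $v_i(h_i')\ge v_i(g)$ (since $g$ is still available in the reverse round of line \ref{line:2nd_rr}).

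\medskip\noindent\textbf{Stage 1 (before ECE).} Fix agents $i,j$. If $j\in L$ then $A_j=\{h_j\}$ and $A_j\setminus\{g\}=\emptyset$, so the condition is vacuous. If $j\in N\setminus L$ then $A_j=\{h_j,h_j'\}$, and I verify the two single-good removals. Removing $h_j$ leaves the leftover good $h_j'$, and $v_i(A_i)\ge v_i(h_i)\ge v_i(h_j')$. Removing $h_j'$ leaves $h_j$, and I split on $i$: if $i\in L$ then $i$ picks before $j$ in the first round, so $v_i(h_i)\ge v_i(h_j)$; if $i\in N\setminus L$ then Lemma \ref{lem:inequalities}\ref{lem:ineq_2} gives $v_i(h_i)\ge\phi^{-1}v_i(h_j)=(\phi-1)v_i(h_j)$. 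Using $\phi-1<1$ where needed, $v_i(A_i)\ge(\phi-1)v_i(A_j\setminus\{g\})$ in all cases, relying only on Lemmata \ref{lem:inequalities} and \ref{lem:1st_rr}.

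\medskip\noindent\textbf{Stage 2 (ECE preserves the invariant).} Envy-Cycle-Elimination performs two kinds of moves. A cycle swap merely permutes the multiset of bundles while strictly raising the value each cycle member assigns to its own bundle, and by Theorem \ref{thm:ece}\ref{fact:value} no agent's value ever drops; since the collection of ``other'' bundles an agent compares against is unchanged as a multiset and her own value does not decrease, $(\phi-1)$-\efx is preserved, and an agent newly facing her own former bundle $B$ is fine because $v_i(A_i^{\mathrm{new}})>v_i(B)\ge(\phi-1)v_i(B\setminus\{g\})$. The delicate move, which I expect to be the main obstacle, is giving a leftover good $g_0$ to an agent $i_0$ of in-degree $0$. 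For the owner $i_0$ the bundle only grows, so her constraints relax; for third parties nothing changes; the real content is another agent $i$ now facing the enlarged bundle $A_{i_0}\cup\{g_0\}$.

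\medskip For that case I would invoke the in-degree-$0$ property, namely that no one envies $i_0$, so $v_i(A_i)\ge v_i(A_{i_0})$ for every $i$. Removing $g_0$ itself is immediate, so the binding sub-case removes the $v_i$-cheapest original good $g^\ast\in A_{i_0}$ of value $\mu$, and I must show $v_i(A_i)\ge(\phi-1)\bigl(v_i(A_{i_0})+v_i(g_0)-\mu\bigr)$. The crux is to bound $c:=v_i(g_0)$ by $(\phi-1)v_i(A_i)$, and here the two populations behave differently but both succeed: for $i\in L$, monotonicity gives $v_i(A_i)\ge v_i(h_i)>\phi\,c$, hence $c<(\phi-1)v_i(A_i)$; for $i\in N\setminus L$, the bundle held at the start of ECE already had value $v_i(h_i)+v_i(h_i')\ge 2c$ (both picks dominate the leftover $g_0$), and values never drop, so $c\le\tfrac12 v_i(A_i)\le(\phi-1)v_i(A_i)$. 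Combining $c\le(\phi-1)v_i(A_i)$ with the no-envy bound $v_i(A_{i_0})\le v_i(A_i)$ and $\mu\ge0$ yields $(\phi-1)\bigl(v_i(A_{i_0})+c-\mu\bigr)\le(\phi-1)\bigl(1+(\phi-1)\bigr)v_i(A_i)=(\phi-1)\phi\,v_i(A_i)=v_i(A_i)$, the last equality being the defining identity $(\phi-1)\phi=1$. This is exactly where $\phi$ is forced: $\phi-1$ is the largest $\alpha$ for which $\alpha(1+\alpha)\le1$, i.e. the largest factor that can absorb one extra good of value up to $\alpha\,v_i(A_i)$, which is why the argument beats $1/2$ yet stops at $\phi-1$.
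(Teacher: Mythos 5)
Your proposal is correct and follows essentially the same route as the paper's proof: Stage 1 reproduces the paper's Cases 1 and 3 (the pre-ECE allocation, via Lemma \ref{lem:inequalities} and the round-robin ordering), and the good-addition step of Stage 2 reproduces Cases 2 and 4, using the same two mechanisms to bound the incoming good's value---Lemma \ref{lem:inequalities}\ref{lem:ineq_1} for $i\in L$ and the ``two picks dominate every leftover'' bound for $i\in N\mysetminus L$---together with the no-envy-at-insertion property and Theorem \ref{thm:ece}\ref{fact:value}. The only packaging difference is that you maintain the $(\phi-1)$-\efx invariant inductively (which forces you to also verify, correctly, that cycle swaps preserve it), whereas the paper argues directly on the final allocation by examining the last good added to $A_j$.
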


\begin{proof}
	Consider the allocation $\mathcal{A} = (A_1, \ldots, A_n)$ returned by the algorithm, and fix two distinct agents $i, j \in N$. If $|A_j| = 1$, then clearly, $v_i(A_i) \ge \max_{g\in A_j} v_i(A_j\mysetminus \{g\}) = 0$. 
	So, assume that $|A_j| \ge 2$ and let $h$ be the last good added to $A_j$ (either in line \ref{line:2nd_rr} by reverse round-robin or in line \ref{line:ece} by envy-cycle-elimination). 
	Of course, at the time this happened, $A_j$ may belonged to an agent $j'$ other than $j$.  Finally, let $A_i^{\text{old}}$, $A_{j'}^{\text{old}}$ be the bundles of $i$ and $j'$, respectively, right before $h$ was allocated (i.e., $h$ was added to $A_{j'}^{\text{old}}$). 
	Note that $A_i^{\text{old}}$ may not necessarily be a subset of $A_i$ due to the possible swaps imposed by Algorithm \ref{alg:ece}, but part \ref{fact:value} of Theorem \ref{thm:ece}  
	implies that $v_i(A_i)\ge v_i(A_i^{\text{old}})$. We consider four cases, depending on whether $i\in L$ and on the type of step during which $h$ was added to $A_{j'}^{\text{old}}$.
	\smallskip
	
	\noindent\ul{Case 1 ($i\in L$ and $h$ added in line \mbox{\ref{line:2nd_rr}}).} We have $A_i^\text{old} = \allowbreak \{h_i\}$, as well as  $j' \in N\mysetminus L$ and $A_j = \allowbreak \{h_{j'}, h_{j'}'\}$. This immediately implies that $v_i(A_i^{\text{old}}) \ge \allowbreak  \max\{v_i(h_{j'}), v_i(h_{j'}')\}$ and, thus, $v_i(A_i) \ge \max_{g\in A_j} v_i(A_j\mysetminus \{g\})$.\footnote{\ Here we achieve the \efx objective exactly. Instead, in a similar argument as in Case 2, we could have used that $v_i(h_i) > v_i(h_{j'})$ and  $v_i(h_i) > \phi\cdot v_i(h'_{j'})$  to get $v_i(A_i) \ge (\phi-1) v_i(A_j)$.}
	\medskip
	
	\noindent\ul{Case 2 ($i\in L$ and $h$ added in line \mbox{\ref{line:ece}}).} By the way that envy-cycle-elimination chooses who to give the next good to, (line \ref{line:source} of Algorithm \ref{alg:ece}), we know that right before $h$ was added, no one envied $j'$. In particular, $v_i(A_i^{\text{old}}) \ge v_i(A_{j'}^{\text{old}})$. We further have $v_i(A_i^{\text{old}}) \ge v_i(h_i) > \phi\cdot v_i(h)$, where the last inequality directly follows from part \ref{lem:ineq_1} of Lemma \ref{lem:inequalities}. Putting everything together,
	\[v_i(A_j) = v_i(A_{j'}^{\text{old}}) + v_i(h) \le (1+\phi^{-1}) v_i(A_i^{\text{old}}) \le \phi\cdot v_i(A_i),\]
	or, equivalently, $v_i(A_i) \ge \phi^{-1} \cdot v_i(A_j) = (\phi-1) v_i(A_j)$.
	\medskip
	
	\noindent\ul{Case 3 ($i\notin L$ and $h$ added in line \mbox{\ref{line:2nd_rr}}).} We have $i, j' \in N\mysetminus L$ and $A_j = \allowbreak \{h_{j'}, h_{j'}'\}$. If $\ell[i]<\ell[j']$, then we proceed in a way similar to Case 1. Indeed, 
	\begin{IEEEeqnarray*}{rCl}
		v_i(A_i) & \ge & v_i(A_i^{\text{old}}) \ge v_i(h_i) \ge \max\{v_i(h_{j'}), v_i(h_{j'}')\} 
		 =  \max_{g\in A_j} v_i(A_j\mysetminus \{g\}) .
	\end{IEEEeqnarray*}
	So, assume that $\ell[i]>\ell[j']$. This, in particular, means that $v_i(h_i') \ge v_i(h_{j'}')$. We have
	\begin{IEEEeqnarray*}{rCl}
		v_i(A_i) & \ge & v_i(A_i^{\text{old}}) \ge v_i(h_i) + v_i(h_i') \ge \frac{1}{\phi} v_i(h_{j'}) + v_i(h_{j'}')\\
		& \ge & \frac{1}{\phi} (v_i(h_{j'}) + v_i(h_{j'}'))  =  (\phi-1) v_i(A_j),
	\end{IEEEeqnarray*}
	where the third inequality directly follows from part \ref{lem:ineq_2} of Lemma \ref{lem:inequalities}.
	\medskip

	\noindent\ul{Case 4 ($i\notin L$ and $h$ added in line \mbox{\ref{line:ece}}).} Arguing like in Case 2, we have $v_i(A_i^{\text{old}}) \ge v_i(A_{j'}^{\text{old}})$. Moreover, by the way round-robin works, we know that $v_i(h_i)\ge v_i(h_i') \ge v_i(h)$. In particular, $v_i(h) \le \frac{1}{2} v_i(\{h_i, h_i'\}) \le \frac{1}{2} v_i(A_i^{\text{old}})$. 
	Putting things together, we have
	\[v_i(A_j) = v_i(A_{j'}^{\text{old}}) + v_i(h) \le \Big( 1+\frac{1}{2}\Big)  v_i(A_i^{\text{old}}) \le \phi\cdot v_i(A_i).\]
	Equivalently, $v_i(A_i) \ge (\phi-1) v_i(A_j)$.
\end{proof}

It is not hard to see that our analysis is tight, i.e.,  there are instances (even with $n=2$ and $m=4$) for which the resulting allocation is not $(\phi - 1 +\varepsilon)$-\efx for any $\varepsilon >0$.

\subsection{Groupwise Maximin Share Fairness}
\label{sec:gmms}
A result of \citet{AmanatidisBM18} (Proposition 3.4) implies that every \emph{exact} \efx allocation is also a $4/7$-\gmms allocation. Of course, the allocation produced by Algorithm \ref{alg:efx} is not exact \efx and, in general, 
an arbitrary $(\phi - 1)$-\efx allocation need not even be a $0.404$-\gmms allocation (see  \ref{app:connections}). 
For the particular allocation returned by Algorithm \ref{alg:efx}, however, we can show that the \gmms guarantee is significantly better. Parts of our proof closely follow the proof of the aforementioned  proposition of \citep{AmanatidisBM18}.
 
Before moving on to the technical details, it is worth mentioning that our analysis does not yield a stronger guarantee for the weaker notion of \mms. This is partly because it relies on guarantees for approximate \ef or \efx (following from the proof of Theorem \ref{thm:efx}) which in turn hold for any subset of agents. Moreover, the algorithm itself is designed with \efx as the primary goal and, thus, it is lacking specific subroutines that are known to yield \mms guarantees, like bipartite matchings and bag-filling \citep{KurokawaPW18,AMNS17,BM17,GargMT19,GHSSY18,GargTaki19}.

We are going to need the following simple lemma that allows to remove appropriately chosen subsets of goods, while reducing the number of agents, so that  the maximin share of a specific agent does not decrease. In particular, the lemma implies that for any good $g$, $\bmu_i(n-1, M\mysetminus \{g\}) \geq \bmu_i(n, M)$.

\begin{lemma}[\citet{AmanatidisBM18}]\label{lem:monotonicity}
	Suppose $\mathcal{T} \in \Pi_n(M)$ is an $n$-maximin share defining partition for agent $i$. 
	Then, for any set of goods $S$, such that there exists some $j$ with $S \subseteq T_j$, it holds that $\bmu_i(n-1, M\mysetminus S) \geq \bmu_i(n, M)$.
\end{lemma}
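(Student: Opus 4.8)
The plan is to construct, directly from the maximin share defining partition $\mathcal{T} = (T_1,\ldots,T_n)$, a partition of $M\mysetminus S$ into $n-1$ bundles in which every bundle retains value at least $\bmu_i(n,M)$ for agent $i$. Since $\mathcal{T}$ is $n$-maximin share defining, we know by definition that $\min_{k} v_i(T_k) = \bmu_i(n,M)$, and hence $v_i(T_k) \ge \bmu_i(n,M)$ for \emph{every} $k$. This observation is the only property of $\mathcal{T}$ that I will need.

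First I would form the $(n-1)$-bundle partition as follows. Let $T_j$ be the bundle containing $S$. Delete $S$ from $T_j$, leaving the residual set $T_j \mysetminus S$ of goods that must still be placed somewhere. Pick any other index $k_0 \neq j$ (this is where we implicitly use $n \ge 2$, so that such a bundle exists) and merge the residual goods into it, forming the bundle $T_{k_0} \cup (T_j\mysetminus S)$. The resulting collection, consisting of this merged bundle together with the $n-2$ untouched bundles $\{T_k : k \neq j,\, k\neq k_0\}$, is a partition of $(M\mysetminus T_j)\cup(T_j\mysetminus S) = M\mysetminus S$ into exactly $n-1$ disjoint bundles.

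Next I would verify that each of these $n-1$ bundles has value at least $\bmu_i(n,M)$. The $n-2$ untouched bundles do, since they are original bundles of $\mathcal{T}$. For the merged bundle, monotonicity of $v_i$ (equivalently, the fact that all items are goods, so $v_i(g)\ge 0$) gives $v_i\big(T_{k_0}\cup(T_j\mysetminus S)\big) \ge v_i(T_{k_0}) \ge \bmu_i(n,M)$. Therefore the minimum bundle value of this $(n-1)$-partition of $M\mysetminus S$ is at least $\bmu_i(n,M)$, and since $\bmu_i(n-1, M\mysetminus S)$ is the maximum over all such partitions of the minimum bundle value, we conclude $\bmu_i(n-1, M\mysetminus S) \ge \bmu_i(n,M)$, as desired.

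I do not expect any serious obstacle here: the argument is essentially a ``remove-and-absorb'' manipulation of the optimal partition, and the two facts it leans on (every bundle of a maximin-defining partition meets the threshold, and adding goods cannot decrease value) are immediate. The only point requiring a moment of care is that the absorbing step needs a second bundle to dump the leftover goods into, which is fine whenever $n \ge 2$; the stated special case $\bmu_i(n-1, M\mysetminus\{g\}) \ge \bmu_i(n,M)$ is then just the instance $S = \{g\}$.
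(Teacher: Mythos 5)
Your proof is correct, and it is the standard ``remove-and-absorb'' argument for this fact; the paper itself states Lemma~\ref{lem:monotonicity} as imported from \citet{AmanatidisBM18} without reproving it, so there is no in-paper proof to diverge from. The one point you flag ($n\ge 2$ so that an absorbing bundle exists) is the only edge case, and it is implicit in the statement since $\bmu_i(n-1,\cdot)$ must be well defined.
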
 

\begin{theorem}\label{thm:gmms}
	The allocation $\mathcal{A} = (A_1, \ldots, A_n)$ returned by Algorithm \ref{alg:efx} is a $\frac{2}{\phi +2}$-\gmms allocation.
\end{theorem}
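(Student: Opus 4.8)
The plan is to fix a subset $N' \subseteq N$ and an agent $i \in N'$, set $s = |N'|$, $M' = \cup_{j \in N'}A_j$, $\mu = \bmu_i(s, M')$, and a defining $s$-maximin partition $\mathcal{T} = (T_1, \dots, T_s)$ of $M'$ with $v_i(T_k) \ge \mu$ for all $k$, so that $v_i(M') \ge s\mu$; the goal is $v_i(A_i) \ge \frac{2}{\phi+2}\mu$, and I would assume for contradiction that $a := v_i(A_i) < \frac{2}{\phi+2}\mu$. The first ingredient is that the pairwise comparisons inside the proof of Theorem \ref{thm:efx} are purely local and therefore survive the restriction to $N'$: for every $j \in N' \mysetminus \{i\}$ with $|A_j| \ge 2$ one obtains, verbatim from the four cases there, either the approximate-envy-freeness bound $v_i(A_j) \le \phi\, a$ (Cases 2, 4 and the second half of Case 3) or the exact \efx comparison $v_i(A_j \mysetminus \{g\}) \le a$ for every $g \in A_j$ (Case 1 and the first half of Case 3), while singleton bundles only give the trivial bound. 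It is important to retain these \emph{per-case} bounds rather than the uniform $(\phi-1)$-\efx guarantee, since an arbitrary $(\phi-1)$-\efx allocation need not even be $0.404$-\gmms.

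Next I would extract the structural consequences that make the algorithm's output strictly better than a generic allocation. From the inequalities used in Cases 1 and 3 of that proof, in particular $v_i(h_i) \ge \max\{v_i(h_{j'}), v_i(h'_{j'})\}$ together with part \ref{lem:ineq_1} of Lemma \ref{lem:inequalities}, every good inside an exact-\efx bundle has value at most $a$; hence the only goods that $i$ values above $a$ sit either in an approximate-envy-free bundle, where such a good is capped at $\phi\,a$, or alone in a singleton. In either case each bundle other than $A_i$ holds at most one good worth more than $a$, and $A_i$ holds none.

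The core of the argument then follows Proposition 3.4 of \citep{AmanatidisBM18}: I would delete the at-most-one-per-bundle goods that $i$ values above $a$ and invoke Lemma \ref{lem:monotonicity} to absorb each deletion into the maximin partition, so that removing the deleted goods lying in a common part $T_k$ costs only a single decrement of the part count; if they occupy $q$ parts this yields $\bmu_i(s-q, M'') \ge \mu$ for the trimmed set $M''$, whence $v_i(M'') \ge (s-q)\mu$. Every remaining good is then worth at most $a$, so the \efx and approximate-envy-freeness bounds give a uniform upper estimate on $v_i(M'')$ (each trimmed bundle at most $2a$, those that shed a large good at most $(\phi-1)a$, and $A_i$ exactly $a$), and the plan is to confront this with $(s-q)\mu$ under $a < \frac{2}{\phi+2}\mu$. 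The golden ratio should enter through the cap $\phi\,a$ on a deleted good weighed against the demand $\mu > \frac{\phi+2}{2}\,a$ of each maximin part, with the balance tipping exactly at $\frac{2}{\phi+2}$.

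The step I expect to be the main obstacle is precisely this final accounting. Bounding every other bundle crudely by $2a$ only reproduces the factor $1/2$, so crossing that barrier forces one to pit the number of deleted goods against the number of maximin parts they can occupy: on one side a large good is confined to an approximate-envy-free bundle and capped at $\phi\,a$, and on the other the small leftover goods cannot by themselves fill too many parts up to value $\mu$, which rules out the extreme configuration ``one large good per part''. Reconciling these two constraints tightly enough to land on $\frac{2}{\phi+2}$ rather than a weaker constant, and carrying the cases $i \in L$ and $i \notin L$ separately because the controlling inequalities of Lemma \ref{lem:inequalities} differ between them, is where the real work lies.
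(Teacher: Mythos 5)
Your overall template matches the paper's: argue by contradiction, import the per-case envy bounds from the proof of Theorem \ref{thm:efx} (which indeed survive restriction to any subgroup), reduce the instance via Lemma \ref{lem:monotonicity}, and finish with a counting argument against the maximin value of the reduced instance. You also correctly isolate the bottleneck: the two-good bundles for which only the exact \efx comparison is available, each worth up to $2a$, reproduce nothing better than $1/2$. The genuine gap is that your concrete deletion rule --- remove the goods that $i$ values above $a$ --- never touches this bottleneck. By your own structural observation, every good in such a bundle is worth at most $a$, so these bundles shed nothing. If all $s-1$ bundles other than $A_i$ are of this type (two goods, each at most $a$, summing to nearly $2a$), then no good is deleted, your upper estimate reads $v_i(M'') \le (2s-1)a$ against the lower bound $s\mu$, and for $s \ge 6$ this is perfectly consistent with $a < \frac{2}{\phi+2}\mu$ (since $2s-1 \ge \frac{\phi+2}{2}s$ once $s \ge \frac{2}{2-\phi} \approx 5.24$). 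No contradiction is reached and the argument stalls at $1/2$; your final paragraph senses the difficulty but does not supply the mechanism that resolves it.

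The paper's fix is essentially the mirror image of yours. It singles out exactly these problematic bundles (``dubious,'' and ``inconvenient'' when their total exceeds $\frac{3}{2}a$; note $\frac{3}{2} < \phi$, so every other bundle is safely capped at $\phi a$) and deletes \emph{pairs of their small goods}, each worth at most $a$, whenever two of them land in the same part of the maximin-defining partition, applying Lemma \ref{lem:monotonicity} once per deleted pair. After this pruning, at most one inconvenient good survives per part of the defining partition, so with $q'$ remaining parts and $x \le q'$ surviving inconvenient goods the number of convenient bundles is $q' - x/2$ and the total value is at most $(q'-\frac{x}{2}-1)\phi a + a + xa \le \frac{\phi+2}{2}q'a$, which is what delivers the factor. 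To complete the picture you would also need the easy preliminary step of discarding singleton bundles (again via Lemma \ref{lem:monotonicity}) and the separate, much simpler case $i \in L$, where $v_i(A_j) \le \phi\, v_i(A_i)$ holds for every $j$ and plain averaging already gives $\phi - 1 > \frac{2}{\phi+2}$.
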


\begin{proof}
	Suppose that $\mathcal{A}$ is not a $\frac{2}{\phi +2}$-\gmms allocation, i.e., 
	there exists a subset of agents $Q\subseteq N$ with $|Q|=q$, and some agent $j\in Q$, so that $v_{j}(A_j)< \frac{2}{\phi +2} \bmu_j(q, R)$, where $R = \cup_{k\in Q} A_k$. That is, with respect to $Q$ and $R$, the restriction of  $\mathcal{A}$ to $Q$ is not a $\frac{2}{\phi +2}$-\mms allocation.
	To facilitate the presentation, and without loss of generality, we may assume that $Q= [q]$ and that agent $1$ is such a ``dissatisfied'' agent. We write $\bmu_1$ instead of $\bmu_1(q, R)$.

	We may remove any agent in $Q$, \emph{other} than agent 1, that receives exactly one good, 
	and still end up with a suballocation that is not a $\frac{2}{\phi +2}$-\gmms allocation. 
	Indeed, if $|A_i|=1$ for some $i\in Q\mysetminus \{1\}$, then $(A_1, \ldots, A_{i-1}, A_{i+1}, \ldots, A_q)$ 
	is an allocation of $R\mysetminus A_i$ to $Q\mysetminus\{i\}$ and, 
	by Lemma \ref{lem:monotonicity}, $\bmu_1' = \allowbreak \bmu_1(q-1, R\mysetminus A_i) \geq \bmu_1$. 
	Thus, $v_1(A_1)< \frac{2}{\phi +2} \bmu_1'$.
	Therefore, again without loss of generality, we may assume that $|A_i|\ge 2$ for all $i\in Q\mysetminus \{1\}$ 
	in the initial allocation $\mathcal{A}$. 
	At this point, we make the distinction on whether $1\in L$ or not. 
	\smallskip
	
	\noindent\ul{Case 1 ($1\in L$).}
	As we see from (the footnote of) Case 1 and from Case 2 of the proof of Theorem \ref{thm:efx}, we always have $v_1(A_1) \ge (\phi-1) v_1(A_i)$ (or equivalently $v_1(A_i) \le \phi  v_1(A_1)$) for all $i\in Q\mysetminus \{1\}$.
	Recall that, by the definition of maxi\-min share, $\bmu_1 \le \frac{1}{q}v_1(R)$. Thus 
	\[	q \bmu_1  \le v_1(R) = \sum_{k\in Q} v_1(A_k) \le q \phi v_1(A_1).\]
	That is, we get $v_1(A_1) \ge (\phi -1) \bmu_1\ge \frac{2}{\phi +2} \bmu_1$, which contradicts the choices of $\mathcal{A}$ and $A_1$.	
	\medskip

	\noindent\ul{Case 2 ($1\notin L$).}
	Consider some $i\in Q\mysetminus \{1\}$ and let $h$ be the last good added to $A_i$. Following the notation introduced in the proof of Theorem \ref{thm:efx}, this bundle belonged to some agent $i'$ and $A_{i'}^{\text{old}}$ denotes the bundle allocated to $i'$ right before $h$ was added. According to Cases 3 and 4 in the proof of Theorem \ref{thm:efx}, if $h$ was added in line \ref{line:2nd_rr} of Algorithm \ref{alg:efx} and $\ell[1]\ge\ell[i']$ or if it was added in line \ref{line:ece}, then $v_1(A_i) \le \phi  v_1(A_1)$.
	We still need to deal with the subcase where $h$ was added in line \ref{line:2nd_rr} but $\ell[1] < \ell[i']$. We call such an $A_i$ \emph{dubious}. For dubious bundles, by their definition, we directly have $|A_i| = 2$ and $v_1(A_1)\ge v_1(h_1) \ge \max_{g\in A_i}v_1(g)$.  
	If a bundle $A_i$ is not dubious, or if it is dubious but we have $v_1(A_i) \le \frac{3}{2}  v_1(A_1) < \phi  v_1(A_1)$, we say that $A_i$ is \emph{convenient}.  
	A (dubious) bundle is \emph{inconvenient} if it is not convenient. A good is inconvenient if it belongs to an inconvenient bundle. Let $B$ be the set of all inconvenient goods.

	Now we are going to show that $v_1(A_1) \ge \frac{2}{\phi +2} \bmu_1(q', R')$ for a reduced instance that we get by possibly removing some inconvenient goods. We do so in a way that ensures that $\bmu_1(q', R')\ge \bmu_1$, thus contradicting  the choices of $\mathcal{A}$ and $A_1$. 
	We consider a $q$-maximin share defining partition $\mathcal{T}$ for agent $1$ with respect to $R$, i.e., $\min_{T_i\in \mathcal{T}} v_1(T_i) = \bmu_1$ and $\cup_{k\in Q} T_k = R$.\footnote{\ Note that while goods in $B$ may have no apparent significance for the reduced instances and the allocations we talk about from this point on, we keep referring to them as \emph{inconvenient}.}
	If there is a bundle of $\mathcal{T}$ containing two goods of $B$, $g_1$, $g_2$, then we remove those two goods and reduce the 
	number of agents by one. By Lemma \ref{lem:monotonicity}, we have that $\bmu_1(q-1, R\mysetminus \{g_1, g_2\}) \ge \bmu_1$.
	We repeat as many times as necessary to get a reduced instance with $q'\le q$ agents and a set of 
	goods $R'\subseteq R$ for which there is a $q'$-maximin share defining partition $\mathcal{T'}$ for agent $1$, such that no bundle contains more than \emph{one} good from $B$.  By repeatedly using Lemma \ref{lem:monotonicity}, we get $\bmu_1(q', R') \geq \bmu_1$. 
	
	Let $x$ be the number of goods from $B$ in the reduced instance. 
	Clearly, $x$ cannot be greater than $q'$, or some bundle of $\mathcal{T'}$ would contain at least $2$ inconvenient goods. Further, if $|B| = y$, i.e., the number of inconvenient goods in the original instance, then we know that the number of convenient bundles in the restriction of $\mathcal{A}$ on $Q$ was $q-\frac{y}{2}$, and that the number of agents was reduced $\frac{y-x}{2}$ times, i.e., $q'= q-\frac{y-x}{2}$. That is, we can express the number of convenient bundles in the original instance in terms of $q'$ and $x$ only, as $q'-\frac{x}{2}$.

	In order to upper bound $v_1(R')$, notice that $R'$ contains all the goods of all the convenient bundles plus $x$ inconvenient goods. 
	Recall that any good of a dubious bundle has value at most $v_1(A_1)$, and that if $A_i$ is convenient then $v_1(A_i) \le \phi  v_1(A_1)$.
	So, we have
	\begin{IEEEeqnarray*}{rCl}
		v_1(R') & \le & x v_1(A_1) + \left( q'-\frac{x}{2} -1\right) \phi v_1(A_1) + v_1(A_1) \nonumber\\
		& = & \left( \phi q'+ (1 - \phi / 2)x - (\phi -1)\right)  v_1(A_1) \nonumber\\
		& \le & \left( \phi + (1 - \phi / 2)\right) q' v_1(A_1) =  \frac{\phi +2}{2} q' v_1(A_1).
	\end{IEEEeqnarray*} 
	Combining this inequality with $\bmu_1 \le \bmu_1(q', R')$ (by the construction of the reduced instance) and $\bmu_1(q', R') \le \frac{1}{q'}v_1(R')$ (by the definition of maxi\-min share), we get
	$v_1(A_1) \ge \frac{2}{\phi +2} \bmu_1$, which contradicts the choices of $\mathcal{A}$ and $A_1$.	
\end{proof}

A natural question is why we do not achieve the factor of $4/7$ of the original proposition of \citep{AmanatidisBM18} instead. 
A close inspection of the original proof reveals that we need a slightly stronger upper bound for the value of the convenient bundles, i.e., a factor of $3/2$ rather than $\phi$ that we actually have here. There is no easy way to fix this \emph{in general} without other things breaking down badly in the analysis of Algorithm \ref{alg:efx}. The crucial observation, however, is that we only need the distinction of convenient  and inconvenient bundles for agents in $N\mysetminus L$. By fine-tuning line \ref{line:if_preproc} of Algorithm \ref{alg:preprocess}, we are able to improve the inequalities about the convenient bundles just for agents in $N\mysetminus L$ and obtain a $4/7$-\gmms allocation, at the expense of some loss with respect to \efx.

\begin{theorem}\label{thm:gmms2}
	Suppose we modified Algorithm \ref{alg:efx} by changing $\phi$ in  line \ref{line:if_preproc} of Algorithm \ref{alg:preprocess} to $3/2$. Then the resulting allocation is a $4/7$-\gmms allocation. It is also a $3/5$-\efx, a $2/3$-\pmms, and an \efo allocation.
\end{theorem}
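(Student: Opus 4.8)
\textbf{Proof plan for Theorem~\ref{thm:gmms2}.}
The statement bundles four guarantees under a single parameter change (replacing $\phi$ by $3/2$ in line~\ref{line:if_preproc} of Algorithm~\ref{alg:preprocess}), so the plan is to revisit each of the earlier proofs and check which inequalities change. The first step is to re-examine Lemma~\ref{lem:inequalities}: its proof uses the threshold in line~\ref{line:if_preproc} only through the multiplicative constant, so with the new threshold it yields $v_i(h_i) > \tfrac{3}{2}\, v_i(g)$ for $i\in L$ and unassigned $g$ (part~\ref{lem:ineq_1}), and $\tfrac{3}{2}\, v_i(h_i)\ge v_i(h_j)$ for $i,j\in N\mysetminus L$ (part~\ref{lem:ineq_2}). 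Lemma~\ref{lem:1st_rr} and Proposition~\ref{prop:efo} are insensitive to the constant, so \efo is immediate and the $h_i$-notation still describes the line~\ref{line:1st_rr} allocation.

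For the \gmms bound I would re-run the proof of Theorem~\ref{thm:gmms} verbatim, only tracking where $\phi$ entered as the per-bundle upper bound. In Case~1 ($1\in L$) the revised Lemma gives $v_1(A_i)\le \tfrac{3}{2}\, v_1(A_1)$, so $q\bmu_1\le v_1(R)\le \tfrac{3}{2} q\, v_1(A_1)$, yielding $v_1(A_1)\ge \tfrac{2}{3}\bmu_1 \ge \tfrac{4}{7}\bmu_1$. In Case~2 ($1\notin L$) the key point the author flags is that the distinction between convenient and inconvenient bundles is only needed for agents in $N\mysetminus L$, and for those the new Lemma~\ref{lem:inequalities}\ref{lem:ineq_2} gives the sharper bound $v_1(A_i)\le \tfrac{3}{2}\, v_1(A_1)$ on convenient bundles. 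Redoing the counting exactly as before but with $\tfrac{3}{2}$ in place of $\phi$ as the convenient-bundle bound, the estimate on $v_1(R')$ becomes
\[
v_1(R') \;\le\; x\, v_1(A_1) + \Big(q'-\tfrac{x}{2}-1\Big)\tfrac{3}{2}\, v_1(A_1) + v_1(A_1)\;=\;\Big(\tfrac{3}{2}q' + \tfrac{1}{4}x - \tfrac{1}{2}\Big)v_1(A_1)\;\le\;\tfrac{7}{4}\,q'\, v_1(A_1),
\]
where the final step uses $x\le q'$. Combined with $\bmu_1\le\bmu_1(q',R')\le \tfrac{1}{q'}v_1(R')$ this gives $v_1(A_1)\ge\tfrac{4}{7}\bmu_1$, the desired contradiction.

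For \efx I would reuse the four-case structure of Theorem~\ref{thm:efx}, substituting the weaker constant $\tfrac{3}{2}$ wherever the old proof invoked $\phi$. Case~1 is constant-free and gives exact \efx. In Cases~2 and~4, where $h$ is added by envy-cycle-elimination, the bound $v_i(A_j)=v_i(A_{j'}^{\text{old}})+v_i(h)\le (1+\tfrac{2}{3})\,v_i(A_i^{\text{old}})$ now holds (using $v_i(h_i)>\tfrac{3}{2}v_i(h)$ in Case~2 and $v_i(h)\le\tfrac12 v_i(A_i^{\text{old}})$ in Case~4, the latter unchanged), giving $v_i(A_i)\ge\tfrac{3}{5}v_i(A_j)$; since $\tfrac{3}{5}<\tfrac{2}{3}$, Case~4 is not the binding one. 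Case~3 with $\ell[i]>\ell[j']$ uses Lemma~\ref{lem:inequalities}\ref{lem:ineq_2} to get $v_i(A_i)\ge \tfrac{2}{3}\big(v_i(h_{j'})+v_i(h'_{j'})\big)=\tfrac{2}{3}v_i(A_j)$, which is also at least the $3/5$ bound. Taking the worst case across all four yields $\tfrac{3}{5}$-\efx. Finally, for the $2/3$-\pmms claim I would apply the \pmms result established for the original algorithm (Theorem with factor $2/3$), noting that \pmms is a two-agent condition and that the relevant inequalities there either are constant-free or are improved by the smaller threshold; the main thing to verify is that no step of the \pmms argument relied on the constant being as large as $\phi$. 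The main obstacle is exactly this bookkeeping across cases: one must confirm that shrinking the threshold never weakens any inequality that a proof depends on, and in particular that in the \gmms counting the per-bundle bound genuinely improves to $3/2$ only because the convenient/inconvenient split is confined to $N\mysetminus L$, so that Case~1 and the dubious-bundle bounds still go through unchanged.
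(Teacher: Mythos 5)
Your proposal follows essentially the same route as the paper's proof: re-run the proofs of Theorems \ref{thm:efx}, \ref{thm:gmms} and \ref{thm:pmms} tracking where the threshold enters, and exploit the fact that the convenient/inconvenient split is only needed for $1\notin L$, where the new threshold yields exactly the $\tfrac{3}{2}$ per-bundle bound that makes the $\tfrac{7}{4}q'$ counting (and hence $4/7$) work. Two small constant-tracking slips are worth flagging, though neither invalidates the conclusions. First, in Case~1 of the \gmms argument ($1\in L$) the correct bound is $v_1(A_i)\le \tfrac{5}{3}v_1(A_1)$, not $\tfrac{3}{2}v_1(A_1)$: it comes from the \efx Cases~1--2 (e.g.\ $v_1(A_i)\le v_1(A_{i'}^{\text{old}})+v_1(h)\le(1+\tfrac{2}{3})v_1(A_1^{\text{old}})$), not directly from Lemma~\ref{lem:inequalities}; the conclusion survives since $\tfrac{3}{5}\ge\tfrac{4}{7}$. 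Second, your remark that the smaller threshold only ``improves'' inequalities is wrong in direction: since $\tfrac{5}{3}>\phi$, the worst-case bundle ratio for $i\in L$ actually degrades (which is why the \efx guarantee drops from $\phi-1\approx 0.618$ to $3/5$), and in the non-bottleneck \pmms cases the bound becomes $v_i(A_i)\ge\tfrac{2}{1+5/3}\bmu_i(2,A_{ij})=\tfrac{3}{4}\bmu_i(2,A_{ij})$ rather than $\tfrac{2}{1+\phi}\bmu_i(2,A_{ij})$; the stated $2/3$ still holds only because the binding bottleneck case is constant-free.
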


\begin{proof}
	The proof for \efo goes through as is.
	We only need to highlight the differences in comparison to the proofs of Theorems \ref{thm:efx}, \ref{thm:pmms} and \ref{thm:gmms}. We begin with \efx and  Theorem \ref{thm:efx}. When $|A_j|=1$ it is again straightforward that $v_i(A_i) \ge \max_{g\in A_j} v_i(A_j\mysetminus \{g\})$. In Cases 1 and 2 (using the footnote of Case 1), by substituting $2/3$ for $\phi$, we get $v_i(A_i) \ge \frac{3}{5} v_i(A_j)$. The first part of Case 3 (where $\ell[i]<\ell[j']$) is the same, giving $v_i(A_i) \ge \max_{g\in A_j} v_i(A_j\mysetminus \{g\})$, while in the second part (where $\ell[i]>\ell[j']$) we get $v_i(A_i) \ge \frac{2}{3} v_i(A_j)$ by using $3/2$ instead of $\phi$.
	Case 4 is exactly the same, giving $v_i(A_i) \ge  \frac{2}{3} v_i(A_j)$. So, for any pair $i, j$ of agents, we either have $v_i(A_i) \ge \max_{g\in A_j} v_i(A_j\mysetminus \{g\})$ or $v_i(A_i) \ge  \frac{3}{5} v_i(A_j)$, implying that  $v_i(A_i) \ge \frac{3}{5} \max_{g\in A_j} v_i(A_j\mysetminus \{g\})$.
	
	Moving to \pmms and Theorem \ref{thm:pmms}, by following the same proof and using the $3/5$ factor from above instead of $\phi$ in the last couple of lines, we still get the same factor $2/3$.
	
	Finally, for \gmms and Theorem \ref{thm:gmms}, we may follow the same arguments with $4/7$ rather than $\phi/(\phi +2)$. For Case 1, using the inequality $v_1(A_1) \ge \frac{3}{5} v_1(A_i)$ from the \efx case above, we have $v_1(A_1) \ge \frac{3}{5} \bmu_1\ge \frac{4}{7} \bmu_1$. For Case 2, the above analysis for \efx implies that when $1\notin L$ and $A_i$ is not dubious we have the even stronger guarantee: $v_1(A_1) \ge  \frac{2}{3} v_1(A_i)$. This directly implies that $v_1(A_1) \ge  \frac{2}{3} v_1(A_i)$ for any \emph{convenient} bundle $A_i$. Coming to the final argument for upper bounding $v_1(R')$, we have
	\begin{IEEEeqnarray*}{rCl}
		v_1(R') & \le & x v_1(A_1) + \left( q'-\frac{x}{2} -1\right) \frac{3}{2} v_1(A_1) + v_1(A_1) \nonumber\\
		& = & \left( \frac{3q'}{2}+\frac{x}{4} -\frac{1}{2} \right)  v_1(A_1) 
		\le  \frac{7}{4} q' v_1(A_1). 
	\end{IEEEeqnarray*} 
	Like before, we combine this inequality with $\bmu_1 \le \bmu_1(q', R')$ and $\bmu_1(q', R') \le \frac{1}{q'}v_1(R')$  to get that
	$v_1(A_1) \ge \frac{4}{7} \bmu_1$.	
\end{proof}

\begin{remark}
As we already mentioned, our approach in both theorems is heavily based on the \ef or \efx guarantees we get in the various different cases and depends less on the details of the algorithms themselves. For that reason the analyses are probably not tight. 
On the other hand, for any $\varepsilon>0$, there are instances for which Algorithm \ref{alg:efx} does not return a $(2/3 + \varepsilon)$-\gmms allocation, independently of how we may tune the constant in line \ref{line:if_preproc} of Algorithm \ref{alg:preprocess}.
While we omit the description of such instances since they do not establish tightness of the algorithm, we suspect that the actual approximation ratio of the modified Algorithm \ref{alg:efx} in Theorem \ref{thm:gmms2} for \gmms is $2/3$.  
\end{remark}

\subsection{Pairwise Maximin Share Fairness}
\label{sec:pmms}
Any result for \gmms directly translates to a result for \pmms with the exact same guarantee. Note, however, that the proof of Theorem \ref{thm:gmms} suggests that the bad event with respect to \gmms is having many inconvenient bundles. When we only deal with two agents at a time, it is not hard to see that inconvenient bundles are not an issue. In fact, their existence would not be able to force the approximation ratio for \pmms below $2/3$, even if the goods where divisible. 
Indeed, following the cases in the proof of Theorem \ref{thm:efx} it is relatively easy to show that this is exactly the guarantee achieved by Algorithm \ref{alg:efx}.

\begin{theorem}\label{thm:pmms}
	The allocation $\mathcal{A} = (A_1, \ldots, A_n)$ returned by Algorithm \ref{alg:efx} is a  $2/3$-\pmms allocation. 
\end{theorem}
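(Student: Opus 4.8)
The plan is to reduce the whole statement to a single easy inequality between the values of two bundles. Fix an arbitrary ordered pair of agents $i, j$ and write $\mu = \bmu_i(2, A_i\cup A_j)$. The bound noted right after Definition~\ref{def:mmshare}, applied with $n=2$ and $S = A_i\cup A_j$, gives $2\mu \le v_i(A_i) + v_i(A_j)$. Hence, if I can show that $v_i(A_j)\le 2\,v_i(A_i)$, then $2\mu \le v_i(A_i)+v_i(A_j)\le 3\,v_i(A_i)$, i.e.\ $v_i(A_i)\ge \tfrac{2}{3}\mu$, which is exactly the desired $2/3$-\pmms guarantee. So the entire theorem comes down to proving $v_i(A_j)\le 2\,v_i(A_i)$ for every ordered pair, together with a separate (and easier) treatment of the case where $A_j$ is too small for this bound to be the binding one.

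First I would dispose of the degenerate case $|A_j|\le 1$. Since $A_j$ has at most one good, in every partition of $A_i\cup A_j$ into two bundles at least one bundle contains no good of $A_j$; that bundle is a subset of $A_i$ and so has value at most $v_i(A_i)$. Taking the maximum over partitions yields $\mu\le v_i(A_i)$, which already gives $v_i(A_i)\ge \mu\ge \tfrac{2}{3}\mu$. Note that here the \efx guarantee from Theorem~\ref{thm:efx} is vacuous, so this case genuinely needs its own (one-line) argument.

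For $|A_j|\ge 2$ I would simply reread the four-case analysis in the proof of Theorem~\ref{thm:efx}, now reading off a bound on $v_i(A_j)$ rather than on $\max_{g\in A_j}v_i(A_j\mysetminus\{g\})$. In Case~1 and the first part of Case~3 we have $A_j=\{h_{j'},h_{j'}'\}$ (exactly two goods) together with $v_i(A_i)\ge \max\{v_i(h_{j'}),v_i(h_{j'}')\}$, whence $v_i(A_j)=v_i(h_{j'})+v_i(h_{j'}')\le 2\,v_i(A_i)$. In Case~2, the second part of Case~3, and Case~4 we have $v_i(A_i)\ge(\phi-1)v_i(A_j)$, i.e.\ $v_i(A_j)\le \phi\,v_i(A_i) < 2\,v_i(A_i)$ since $\phi<2$. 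Either way $v_i(A_j)\le 2\,v_i(A_i)$, which completes the argument via the reduction above.

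The reason this is almost immediate — rather than requiring the convenient/inconvenient bookkeeping of Theorem~\ref{thm:gmms} — is that with two agents there is a single ``other'' bundle $A_j$, so no accumulation of dubious bundles can occur. Concretely, the factor $2/3$ is exactly the loss incurred against the divisible relaxation $2\mu\le v_i(A_i)+v_i(A_j)$ when $v_i(A_j)$ is allowed to be as large as $2\,v_i(A_i)$, which is precisely what the exact-\efx cases permit; the golden-ratio cases are strictly better and do not bind. I expect no real obstacle: the only (minor) subtlety is noticing that the exact-\efx cases come with $|A_j|=2$, so the per-good bound $v_i(A_i)\ge\max_{g\in A_j}v_i(g)$ upgrades to the bundle bound $v_i(A_j)\le 2\,v_i(A_i)$.
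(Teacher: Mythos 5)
Your proposal is correct and follows essentially the same route as the paper: both reduce the claim to the inequality $2\bmu_i(2,A_i\cup A_j)\le v_i(A_i)+v_i(A_j)$ together with the bound $v_i(A_j)\le 2\,v_i(A_i)$ read off from the case analysis of Theorem~\ref{thm:efx}, with the exact-\efx cases (where $A_j=\{h_{j'},h_{j'}'\}$ and $v_i(A_i)\ge\max\{v_i(h_{j'}),v_i(h_{j'}')\}$) giving the binding factor $2$ and the golden-ratio cases giving the better factor $\phi<2$. The only cosmetic differences are that you dispose of $|A_j|\le 1$ by a direct pigeonhole argument where the paper invokes Lemma~\ref{lem:monotonicity}, and that you group Case~1 with the bottleneck case rather than with the $\phi$-cases; neither affects correctness.
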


\begin{proof}
	Fix two distinct agents $i, j \in N$. 
	To show the desired approximation ratio we are going to use the analysis in the proof of Theorem \ref{thm:efx} and the simple inequality $2\bmu_i(2, S)\le v_i(S)$ that directly follows from Definition \ref{def:MMS} for any subset $S$ of goods.
	To simplify the notation, we use $A_{ij}$ as a shorthand for $A_i \cup A_j$.  
	
	If $|A_j| = 1$, then by Lemma \ref{lem:monotonicity}, $\bmu_i(2, A_{ij})\le \bmu_i(1, A_i) = v_i(A_i)$. 
	So, assume that $|A_j| \ge 2$ and let $h$ be the last good added to $A_j$.  We adopt the notation of the proof of Theorem \ref{thm:efx} here as well. That is, we assume that right before $h$ was added, $A_j$ belonged to some agent $j'$.  
	We first examine the case where $i\in N\mysetminus L$, $h$ was added in line \ref{line:2nd_rr} of Algorithm \ref{alg:efx} by reverse round-robin, and $\ell[i]<\ell[j']$. This case is the \emph{bottleneck} for achieving a better approximation factor, as the rest of the analysis reveals.	
	Here we know that $A_j = \{h_{j'}, h_{j'}'\}$ and that $\max\{v_i(h_{j'}), v_i(h_{j'}')\}\le v_i(h_i)$. Therefore,
	\[2\bmu_i(2, A_{ij})\le v_i(A_{ij}) \le v_i(A_i) + 2 v_i(h_i) \le 3 v_i(A_i) . \]
	It immediately follows that $v_i(A_i) \ge \frac{2}{3}\bmu_i(2, A_{ij}).$
	
	For all the other cases, by examining the proof of Theorem \ref{thm:efx}, we can see that $v_i(A_j) \le \phi v_i(A_i)$.
	Therefore,
	\[2\bmu_i(2, A_{ij})\le v_i(A_{ij}) \le v_i(A_i) + \phi v_i(A_i) \le 3 v_i(A_i) . \]
	Again, it follows that $v_i(A_i) \ge \frac{2}{3}\bmu_i(2, A_{ij})$.
\end{proof}

While the above factor is tight for our algorithm, it is easy to see that if we exclude the bottleneck case in the proof of Theorem \ref{thm:pmms}, then the approximation ratio goes up to $\frac{2}{\phi +1} \approx 0.764$. Hence, we could try to improve this single problematic case where both $i$ and $j'$ receive two goods from the round-robin subroutine but $j'$ has lower priority. Note that the bundles of $i$ and $j'$  start off well, i.e., right after line \ref{line:2nd_rr} of Algorithm \ref{alg:efx} we have $v_i(A_i) \ge \bmu_i(2, A_{ij'})$. The issue is that during the envy-cycle-elimination phase, $A_i$ might be updated to a bundle that still has value almost $v_i(h_i)$ but can be combined with $h_{j'}$ and $h_{j'}'$ to produce two sets of value roughly $\frac{3}{2} v_i(h_i)$. To remedy that, we can modify slightly the envy graph. The high level idea---due to \citet{Kurokawa17}---is that an agent from $N\mysetminus L$ should only exchange her \emph{initial} bundle of two goods for something significantly better. 

Suppose we start with a partial allocation $\mathcal{P} = (P_1,\ldots,P_n)$ produced in line \ref{line:2nd_rr} of Algorithm \ref{alg:efx}. For $\alpha > 1$, the $\alpha$-modified envy graph  $G_{\mathcal{P}}^{\alpha}$ is defined like the envy graph $G_{\mathcal{P}}$ but we drop any edge $(i,j)$ where:  $i\in N\mysetminus L$, \emph{and} $i$ still has her original bundle,  \emph{and}  $\alpha\cdot v_i(P_i) > v_i(P_j)$. That is, agents in $N\mysetminus L$ are represented in the envy graph as having an artificially amplified value (by a factor of $\alpha$) specifically for their original bundles.

The following theorem indicates how far we can push the appro\-ximation factor for \pmms,  at the expense of \efo, while preserving the original guarantees with respect to \efx and \gmms.

\begin{theorem}\label{thm:pmms2}
	Suppose we modified Algorithm \ref{alg:efx} 
	by using the $(\phi-\frac{1}{2})$-adjusted envy graph in Algorithm \ref{alg:ece}. Then the resulting allocation is a  $\frac{4\phi - 2}{2\phi +3}$-\pmms and a $\frac{2}{2\phi - 1}$-\efo allocation. Moreover, the guarantees of Theorems \ref{thm:efx} and \ref{thm:gmms} are not affected. 
\end{theorem}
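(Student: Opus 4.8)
The plan is to check the four claims in turn, changing as little as possible in the existing analyses, since the only modification is the envy graph used in the final call to Algorithm~\ref{alg:ece}; write $\alpha=\phi-\frac12$. A preliminary observation is that every edge of $G_{\mathcal{P}}^{\alpha}$ is also an edge of the ordinary envy graph (we only ever \emph{drop} edges), so any cycle found in $G_{\mathcal{P}}^{\alpha}$ is a genuine strict-envy cycle; shifting bundles along it strictly raises $\sum_k v_k(P_k)$, makes each agent on the cycle weakly happier, and hence the modified algorithm terminates and still satisfies the value-monotonicity property~\ref{fact:value}. For \efo I would prove the stronger invariant that every intermediate allocation is $\frac{2}{2\phi-1}$-\efo (note $\frac{2}{2\phi-1}=\frac1\alpha\le1$), starting from the exact \efo allocation of line~\ref{line:2nd_rr} (Proposition~\ref{prop:efo}). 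Cycle elimination preserves any $\beta$-\efo invariant because no agent's value for its own bundle decreases and each permuted bundle stays covered by its previous holder's witnessing good; and when a good $h$ is added to a source $j'$ of $G_{\mathcal{P}}^{\alpha}$, every $i\neq j'$ has the edge $(i,j')$ absent, so either $i$ does not truly envy $j'$ (giving exact \efo) or the edge was dropped, forcing $v_i(A_{j'}^{\text{old}})<\alpha\, v_i(A_i)$ and hence $v_i(A_i)\ge\frac1\alpha v_i(A_j\mysetminus\{h\})=\frac{2}{2\phi-1}v_i(A_j\mysetminus\{h\})$ after deleting the freshly added good $h$.

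For Theorems~\ref{thm:efx} and \ref{thm:gmms} I would only revisit the cases of the proof of Theorem~\ref{thm:efx} in which $h$ is placed in line~\ref{line:ece}, namely Cases~2 and~4; the round-robin cases are untouched. In Case~2 ($i\in L$) no outgoing edge of $i$ is ever dropped, so the source condition still gives $v_i(A_i^{\text{old}})\ge v_i(A_{j'}^{\text{old}})$ and the argument is verbatim. In Case~4 ($i\notin L$), if $(i,j')$ was not dropped the original bound goes through; if it was dropped, then $A_i^{\text{old}}=\{h_i,h_i'\}$, $v_i(A_{j'}^{\text{old}})<\alpha\, v_i(A_i^{\text{old}})$ and $v_i(h)\le v_i(h_i')\le\frac12 v_i(A_i^{\text{old}})$, whence
\[v_i(A_j)=v_i(A_{j'}^{\text{old}})+v_i(h)<\Big(\alpha+\tfrac12\Big)v_i(A_i^{\text{old}})=\phi\,v_i(A_i),\]
the identity $\alpha+\frac12=\phi$ being exactly why $\alpha=\phi-\frac12$ is chosen. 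Thus $(\phi-1)$-\efx is retained for every pair, and since the proof of Theorem~\ref{thm:gmms} uses only the per-pair inequalities $v_1(A_i)\le\phi\, v_1(A_1)$ together with the round-robin-based classification of dubious and convenient bundles, which is unchanged, the $\frac{2}{\phi+2}$-\gmms guarantee is retained as well.

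The improved \pmms factor is the crux, and only the bottleneck case of the proof of Theorem~\ref{thm:pmms} needs new treatment: $i\in N\mysetminus L$, $A_j=\{h_{j'},h_{j'}'\}$ a two-good round-robin bundle with $\ell[i]<\ell[j']$, so that $v_i(h_{j'}),v_i(h_{j'}')\le v_i(h_i)$. I would split on whether $h_i\in A_i$ in the final allocation, writing $A_{ij}=A_i\cup A_j$. If $h_i\in A_i$, then $h_i$ is a maximum-value good of $A_{ij}$ and, for any $2$-partition $(X,Y)$ with $h_i\in X$, the side avoiding $h_i$ is either contained in $A_i$ (when both goods of $A_j$ lie together) or has value at most $(v_i(A_i)-v_i(h_i))+v_i(h_i)=v_i(A_i)$; either way $\bmu_i(2,A_{ij})\le v_i(A_i)$, a ratio of $1$. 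If $h_i\notin A_i$, then $i$ relinquished $h_i$ along a cycle of $G_{\mathcal{P}}^{\alpha}$, which by the drop rule can happen only towards a bundle worth at least $\alpha\, v_i(\{h_i,h_i'\})\ge \alpha\, v_i(h_i)$, so value-monotonicity gives $v_i(A_i)\ge\alpha\, v_i(h_i)$. Then $2\bmu_i(2,A_{ij})\le v_i(A_{ij})\le v_i(A_i)+2v_i(h_i)\le\big(1+\tfrac2\alpha\big)v_i(A_i)$ yields $v_i(A_i)\ge\frac{2\alpha}{\alpha+2}\bmu_i(2,A_{ij})=\frac{4\phi-2}{2\phi+3}\bmu_i(2,A_{ij})$, as claimed; every other pair already obeys $v_i(A_j)\le\phi\, v_i(A_i)$, giving $v_i(A_i)\ge\frac{2}{\phi+1}\bmu_i(2,A_{ij})\ge\frac{4\phi-2}{2\phi+3}\bmu_i(2,A_{ij})$, exactly as in Theorem~\ref{thm:pmms}.

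The delicate point I anticipate is the inequality $v_i(A_i)\ge\alpha\, v_i(h_i)$ used when $h_i\notin A_i$: one must guarantee that the \emph{first} time $i$ parts with $h_i$ the exchange is made across a $G_{\mathcal{P}}^{\alpha}$-edge, i.e.\ while $i$ still enjoys the amplified valuation. This is subtle because $i$ may first be handed goods as a source (retaining $h_i$) and only later be pulled into a cycle. I would resolve it by letting the protection persist until $i$'s first cycle exchange, amplifying $i$'s value for its \emph{current} not-yet-exchanged bundle, so that the drop condition $\alpha\, v_i(P_i)>v_i(P_j)$ still governs that exchange and the acquired bundle is worth at least $\alpha$ times $i$'s bundle—hence at least $\alpha\, v_i(\{h_i,h_i'\})$—at that moment. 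Getting this bookkeeping exactly right, while simultaneously confirming that the amplification disturbs neither value-monotonicity nor the $\frac{2}{2\phi-1}$-\efo invariant established above, is where the real care lies.
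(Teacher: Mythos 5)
Your proposal is correct and follows essentially the same route as the paper's proof: the \efo bound from the $\alpha$-relaxed source condition, the re-derivation of Case 4 of Theorem \ref{thm:efx} via $\alpha+\tfrac{1}{2}=\phi$, and the split of the \pmms bottleneck case according to whether agent $i$ still holds her round-robin bundle, with the cycle-swap condition supplying $v_i(A_i)\ge\bigl(\phi-\tfrac{1}{2}\bigr)v_i(\{h_i,h_i'\})$ otherwise. The ``delicate point'' you flag---that the amplified valuation must persist until $i$'s first cycle exchange even if her bundle has meanwhile grown as a source---is precisely the reading the paper adopts implicitly, and your case split on $h_i\in A_i$ (rather than on $A_i=\{h_i,h_i'\}$ exactly) handles that corner case a bit more cleanly than the paper does.
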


\begin{proof}
	First, it is easy to see the \efo guarantee. Before running the modified envy-cycle-elimination algorithm, the allocation is \efo. Then, whenever a new good gets allocated, it is given to an agent that no one envies by more than a factor of $\frac{2\phi-1}{2}$. That is, for any agents $i, j \in N$, if $h$ is the last good given to $j$, then $v_i(A_i) \ge \frac{2}{2\phi-1} v_i(A_j\mysetminus \{h\})$.
	
	For the other notions, we only need to highlight the differences from the proofs of Theorems \ref{thm:efx}, \ref{thm:pmms} and \ref{thm:gmms}. We begin with \efx and  Theorem \ref{thm:efx}. It is easy to see that the \emph{only} step that differs is Case 4.  
	It is not anymore the case that $v_i(A_i^{\text{old}}) \ge v_i(A_{j'}^{\text{old}})$. Instead, $(\phi-\frac{1}{2}) v_i(A_i^{\text{old}}) \ge v_i(A_{j'}^{\text{old}})$. 
	Also, like in the original proof,  we have $v_i(h) \le  \frac{1}{2} v_i(A_i^{\text{old}})$. 
	Putting everything together, we have
	\begin{IEEEeqnarray*}{rCl}
		v_i(A_j) & = & v_i(A_{j'}^{\text{old}}) + v_i(h) \le \Big( \phi-\frac{1}{2}+\frac{1}{2}\Big)  v_i(A_i^{\text{old}}) 
		 =   \phi\cdot v_i(A_i).
	\end{IEEEeqnarray*}
	We conclude that we have a $(\phi-1)$-\efx allocation.
	
	Given the guarantee for \efx, the proof of Theorem \ref{thm:gmms} for \gmms goes through as is.
	
	Finally, for \pmms and Theorem \ref{thm:pmms} we may follow the same general proof except for some details.
	We fix two agents $i, j \in N$.
	Like in the proof of Theorem \ref{thm:pmms}  we use $A_{ij}$ to denote $A_i \cup A_j$.  
	If $|A_j| = 1$, the original argument holds, so we assume that $|A_j| \ge 2$. Let $h$ be the last good added to $A_j$ and assume that right before $h$ was added, $A_j$ belonged to some agent $j'$. 
	
	Again, we first go over  the bottleneck  case where $i\in N\mysetminus L$, $h$ was added in line \ref{line:2nd_rr} of Algorithm \ref{alg:efx} by reverse round-robin, and $\ell[i]<\ell[j']$. Then $A_j = \{h_{j'}, h_{j'}'\}$ and  $\max\{v_i(h_{j'}), v_i(h_{j'}')\}\le v_i(h_i)$. A vital  distinction now is whether $A_i$ is $i$'s original bundle from line \ref{line:2nd_rr} of Algorithm \ref{alg:efx}. Suppose this is the case, i.e.,  $A_i = \{h_i, h'_i\}$ and $A_{ij} = \{h_i, h'_i, h_j, h'_j\}$. Then, $h_i = \argmax_{g\in A_{ij}}v_i(g)$ and it is not hard to see that 
	$v_i(A_i) \ge \bmu_i(2, A_{ij})$ (see also Lemma \ref{lem:4_mms}\ \ref{part:4_mms_a} within the proof of Theorem \ref{thm:m=n+2}). Next, suppose $A_i$ is not $i$'s original bundle. Then, for the modified  envy-cycle-elimination algorithm to give $i$ another bundle, it must be the case that $v_i(A_i) \ge \frac{2\phi-1}{2} v_i(\{h_i, h'_i\})$. Therefore,
	\begin{IEEEeqnarray*}{rCl}
		2\bmu_i(2, A_{ij}) & \le  & v_i(A_{ij}) \le v_i(A_i) + 2 v_i(h_i) 
		 \le  v_i(A_i) + 2 v_i(\{h_i, h'_i\})  \\
		& \le & v_i(A_i) + \frac{4}{2\phi-1} v_i(A_i)
		 =   \frac{2\phi +3}{2\phi-1} v_i(A_i).
	\end{IEEEeqnarray*}
	It follows that $v_i(A_i) \ge \frac{4\phi-2}{2\phi +3}\bmu_i(2, A_{ij}).$
	
	For all the other cases,  we have that $v_i(A_j) \le \phi v_i(A_i)$.
	Therefore, $2\bmu_i(2, A_{ij})\le v_i(A_{ij}) \le v_i(A_i) + \phi v_i(A_i)$ and 
	it follows that $v_i(A_i) \ge \frac{2}{\phi +1}\bmu_i(2, A_{ij}) \ge \frac{4\phi-2}{2\phi +3}\bmu_i(2, A_{ij}).$
	
	We conclude that the allocation is $\frac{4\phi-2}{2\phi +3}$-\pmms.
\end{proof}


\section{GMMS, PMMS, and EFX with a Few Goods}
\label{sec:m=n+2}
In this section we focus on the exact versions of the fairness notions under consideration. In particular, we  
show that \gmms allocations always exist when $m\le n+2$.
This implies that \pmms and \efx allocations also exist for this case by the discussion in Section \ref{sec:prelims}.\footnote{\ Actually, the existence of \efx allocations is directly implied by the existence of \pmms allocations \emph{only} when all values are positive. However, our result is more general.} 

As it is indicated in the proof of Theorem \ref{thm:m=n+2}, the interesting case is when $m = n+2$ and is tackled by Algorithm \ref{alg:m=n+2}. When $m\le n$ the problem is trivial, and the $m= n+1$ case is rather straightforward as well. Adding one extra good, however, makes things significantly more complex. To point out how challenging these simple restricted cases can be, we note that for the much better studied notion of \mms fairness it is still open whether exact \mms allocations exist when $m=n+5$ \citep{KurokawaPW18}.

Quite surprisingly, the envy-cycle-elimination algorithm again comes to rescue for the case when $m=n+2$. We first run the round-robin algorithm to allocate  $n-1$ goods to the first $n-1$ agents. After this, we have 3 goods remaining. Allocating these goods to the last agent may destroy the properties we are after, so we need to be careful on how to handle these three goods. 
Instead, we (pretend to) pack them into two  boxes; the big box (i.e., the virtual good $p$) ``contains'' two goods and the small box (i.e., the virtual good $q$) ``contains'' one. We tell each agent separately that the big box contains her favorite two out of the three items and give the big box to the last agent. Then we proceed using the envy-cycle-elimination algorithm. At the end, the owner of the big box gets her two favorite goods, while the owner of the small box gets the remaining good. 

\begin{algorithm}[h!]
	\DontPrintSemicolon 
		Let $\ell = (1,2,\ldots,n)$ and $\mathcal{A} = (\emptyset,\ldots,\emptyset)$ \;
		$\left( \mathcal{A}, M'\right) =  \text{Round-Robin}(N, \mathcal{A}, M, \ell, n-1)$ \label{line:rr} \;
		Create two virtual goods $p$ and $q$, such that for all $i\in N$: \newline $v_i(q) = \min_{g\in M'} v_i(g)$ and $v_i(p) = v_i(M') - v_i(q)$\;
		Allocate $p$ to agent $n$\;
		\vspace{1pt}$\mathcal{A} = \text{Envy-Cycle-Elimination}(N, \mathcal{A}, \{q\})$ \label{line:ece_pq}\;
		Give to the final owner of $p$ her two favorite goods from $M'$ and to the final owner of $q$ the remaining good \label{line:alloc_pq}\;
		\Return $\mathcal{A}$ \; 
	\caption{Draft-Pack-and-Eliminate$(N, M)$} \label{alg:m=n+2}
\end{algorithm}

\begin{theorem}\label{thm:m=n+2}
	For instances with $m\le n+2$, a \gmms allocation always exists and can be efficiently computed.
\end{theorem}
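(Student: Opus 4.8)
The plan is to treat the three regimes $m\le n$, $m=n+1$, and $m=n+2$ separately, with essentially all the work in the last one. When $m\le n$ I would give every good to a distinct agent, so that each agent holds at most one good; then for any $N'$ and any $i\in N'$ the set $\cup_{j\in N'}A_j$ has at most $|N'|$ goods, and its $|N'|$-maximin share is either $0$ (if some bundle of the defining partition is forced empty) or the value of a least valuable good, both of which are at most $v_i(A_i)$. For $m=n+1$ I would run Algorithm \ref{alg:rr} once (one good per agent) and hand the single leftover good to the \emph{last} agent of the ordering; since every other agent picks before that agent and before the leftover good, her own good dominates both goods of the unique fat bundle, which—by the reduction below—is exactly what is needed.

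For $m=n+2$ the key conceptual step is a reduction that reads off \gmms from very few, very small subsets. Call an agent \emph{fat} if she ends with at least two goods and \emph{thin} otherwise; since $m-n=2$ there are either two fat agents holding two goods each, or a single fat agent holding three goods, so the set $F$ of fat agents satisfies $|F|\le 2$. Fixing $i$ and $N'\ni i$, I would repeatedly apply Lemma \ref{lem:monotonicity} to delete every thin agent of $N'\mysetminus\{i\}$ together with its single good: each single good lies in one bundle of the relevant maximin-defining partition, so each deletion does \emph{not} decrease $i$'s maximin share, giving $\bmu_i(|N'|,\cup_{j\in N'}A_j)\le \bmu_i(|N''|,\cup_{j\in N''}A_j)$ for $N''=N'\cap(F\cup\{i\})$. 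Hence it suffices to verify the \gmms inequality for subsets contained in $F\cup\{i\}$, i.e.\ for at most three agents. The workhorse for these subsets is the elementary observation that if every good of a set $S$ has value at most $V:=v_i(A_i)$ and $|S|\le 2k-1$, then any partition of $S$ into $k$ parts contains a singleton part, whose value is at most $V$, so $\bmu_i(k,S)\le V$.

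To apply this I would establish two structural facts about the output of Algorithm \ref{alg:m=n+2}. First, right before the envy-cycle-elimination phase every agent's bundle—whether some $h_i$ or the big box $p$—is worth at least $\max_{g\in M'}v_i(g)$ to her (for $p$ because it is $i$'s two favourite items of $M'$, so it beats any single item of $M'$), and by part \ref{fact:value} of Theorem \ref{thm:ece} the phase only improves bundles; thus every thin agent $i$ finally holds a good with $v_i(A_i)\ge\max_{g\in M'}v_i(g)$. Second, a \emph{source} fact: the agent receiving the virtual good $q$ is, at that moment, envied by nobody, so the single good $W$ it held just before (namely $p$ in the one-fat-agent case, or the round-robin good $h_\sigma$ in the two-fat-agent case) is dominated by everyone's final bundle, i.e.\ $v_i(A_i)\ge v_i(W)$ for all $i$. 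Together these show that, for a thin $i$ and $N''\subseteq F\cup\{i\}$, every real good in $\cup_{j\in N''}A_j$ has value at most $v_i(A_i)$ to $i$ (goods of $M'$ by the first fact, the at most one round-robin good inside a fat bundle by the source fact). Since such a union has at most $1+2+2=5$ goods split into $|N''|\le 3$ parts, the counting observation yields $\bmu_i\le v_i(A_i)$ in every case \emph{except} the single pair $N''=\{i,a\}$ when $a$ hoards all three goods of $M'$ (four goods, two parts, so $|S|=2k$); there I would instead invoke the stronger bound $v_i(A_i)\ge v_i(p)$ from the source fact to rule out any balanced split directly.

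The main obstacle is precisely the \emph{personalized} packing into $p$ and $q$: the winner $a$ of the big box is handed \emph{her own} two favourite items of $M'$, so the real bundles appearing in the unions above reflect $a$'s (and $b$'s) preferences rather than those of the agent $i$ whose \gmms guarantee is being certified, and one must check that domination nonetheless holds for all $i$ simultaneously. The same subtlety reappears when $i$ is itself fat: for the two-agent comparisons I would isolate a small lemma (stated as Lemma \ref{lem:4_mms}, in the spirit of the $4/7$ argument) asserting that a two-good bundle containing the maximum good of a four-good set meets its owner's pairwise maximin share; the source fact is what guarantees the hoarded round-robin good $h_\sigma$ is indeed that maximum for its holder, so the lemma applies. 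Assembling the reduction with these two- and three-agent checks yields the exact \gmms guarantee, whence \pmms and \efx follow from the implications recorded in Section \ref{sec:prelims}.
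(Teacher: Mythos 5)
Your proposal is correct and takes essentially the same route as the paper's proof: the same handling of $m\le n$ and $m=n+1$, the same reduction via Lemma \ref{lem:monotonicity} to subsets containing only the (at most two) fat agents plus the focal agent, the same pigeonhole counting (Lemma \ref{lem:max_mms}), the same source-node and monotonicity facts about envy-cycle elimination, and the same four-good lemma for the fat-versus-fat comparisons. The one spot to tighten is the pairwise check for the big-box winner when her favourite good of the combined four-good set is the \emph{other} fat agent's good, so that her own two-good bundle does not contain the maximum and your stated lemma does not literally apply; the fix is to combine your source-fact inequality (her bundle is worth at least that good to her) with part \ref{part:4_mms_b} of Lemma \ref{lem:4_mms}, which is exactly what the paper does.
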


\begin{proof}
When $m\le n$, we arbitrarily allocate one good to each agent, till there are no goods left, to produce $\mathcal{A} = (A_1, \ldots, A_n)$. Fix a subset $S$ of agents, and let $i\in N$. The combined bundle of all agents in $S$ either contains strictly less than $|S|$ goods or exactly $|S|$ goods. In the first case, we trivially have $\bmu_i(|S|, \cup_{j\in S} A_j) = 0$, whereas in the second case, we have $\bmu_i(|S|, \cup_{j\in S} A_j) = \min_{g\in \cup_{j\in S} A_j} v_i(g)$. In both cases, we have that $v_i(A_i) \geq \bmu_i(|S|, \cup_{j\in S} A_j)$, and $\mathcal{A}$ is a \gmms allocation.

For $m>n$, we will use the following simple observations. 
	\begin{lemma}\label{lem:max_mms}
		Let $Q\subseteq N$ 
		and $T\subseteq M$ such that $|T| = 2 |Q| - 1$. Then, for any $i\in Q$, we have
		$\max_{g\in T}v_i(g) \ge \bmu_i(|Q|, T)$.
	\end{lemma}
	
\begin{proof}[Proof of Lemma \ref{lem:max_mms}]
By the pigeonhole principle, in any possible partition of $T$ in $|Q|$ parts, at least one bundle will have at most $1$ good. So, in any $|Q|$-maximin share partition of $T$ for an agent $i\in Q$, her worst bundle's worth is upper bounded by $\max_{g\in T}v_i(g)$.
\end{proof}
	
The next one is established within the proof of Theorem 5.1 of \citep{AmanatidisBM16}.
	
	\begin{lemma}
		\label{lem:4_mms}
		Let $i \in N$ 
		and $T= \{g_1, g_2, g_3, g_4\} \subseteq M$ such that $v_i(g_1)\ge v_i(g_2) \ge v_i(g_3) \ge v_i(g_4)$. Then
		\begin{enumerate}[leftmargin=*,itemsep=3pt,topsep=2pt,parsep=0pt,partopsep=0pt,label=\rm{\roman*)}]
			\item $v_i(\{g_1, g_4\}) \ge \bmu_i(2, T)$ \label{part:4_mms_a}, and
			\item $\max\big\{v_i(g_1), v_i(\{g_2, g_3\})\big\} \ge \bmu_i(2, T)$. \label{part:4_mms_b}
		\end{enumerate}	
	\end{lemma}

When $m =  n +1$, 
let $\mathcal{A} = (\emptyset,\ldots,\emptyset)$, i.e., $A_i=\emptyset$ for all $i\in N$, and $\ell = (1,2,\ldots,n)$, i.e., the standard lexicographic order. We can first run Round-Robin$(N, \mathcal{A}, M, \ell, n)$ and then give the remaining good to agent $n$ to get the final allocation $(\{h_1\},\ldots,\{h_{n-1}\},\{h_n, h_n'\})$. 
Consider a subset of agents $S$ and an agent $i\in S$. From $S$, we can eliminate anyone, other than $i$, that owns at most $1$ good, without reducing $i$'s maximin share. That is, $v_i(A_i) = \bmu_i(1, A_i) \ge \bmu_i(|S|, \cup_{j\in S} A_j)$, where the last inequality follows after $|S| - 1$ applications of Lemma \ref{lem:monotonicity}. Hence, if no agent in $S\mysetminus \{i\}$ got $2$ goods, then we would be done. The problem then reduces to checking the case where $S$ consists of agents $i$ and $n$. In particular, we need to show that $v_i(h_i) \ge \bmu_i(2, \{h_i, h_n, h_n'\})$
for all $i <n$. Indeed, given that $v_i(h_i)\ge \max\{v_i(h_n), v_i(h_n')\}$, 
this directly follows from Lemma \ref{lem:max_mms}.
Hence, $\mathcal{A}$ is a \gmms allocation.
	
When $m =  n +2$, we use Algorithm \ref{alg:m=n+2} to compute the allocation $\mathcal{A} = (A_1, \ldots, A_n)$. 
We consider two cases, depending on whether  $\mathcal{A}$ contains a bundle with $3$ goods or not. 
\smallskip

	\noindent\ul{Case 1 (One agent receives $3$ goods).}
	Let  $\hat{\jmath}$ be the agent for whom $|A_{\hat{\jmath}}| = 3$. 
	By arguing like before about repeatedly eliminating agents that received exactly $1$ good via Lemma \ref{lem:monotonicity}, it is easy to see that the problem of whether $v_i(A_i)\geq \bmu(|S|, \cup_{j\in S} A_j )$, for any $S\subseteq N$ and any $i\in S$, is reduced to whether $v_i(A_i) \ge \bmu_i(2, A_i \cup A_{\hat{\jmath}})$ for any $i\in N\mysetminus\{ \hat{\jmath}\}$.
	
	The only way that $\hat{\jmath}$ ended up with $3$ goods is if she received both $p$ and $q$ and $A_{\hat{\jmath}} = M'$. When $q$ was allocated, some agent $\hat{\jmath}\,'$ (possibly other than $\hat{\jmath}\,$) had $p$. Given that no one envied $\hat{\jmath}\,'$ at the time and that the envy-cycle-elimination never decreases the value of an agent's bundle (Theorem \ref{thm:ece}\ \ref{fact:value}), for any $i\in N\mysetminus \{\hat{\jmath}\}$ we have
	\[v_i(A_i)\ge v_i(p) = v_i(A_{\hat{\jmath}}) - \min_{g\in A_{\hat{\jmath}}} v_i(g) = \max_{g\in A_{\hat{\jmath}}} v_i(A_{\hat{\jmath}}\mysetminus \{g\}) .\]
	Using part \ref{part:4_mms_b} of Lemma \ref{lem:4_mms}, this implies that $v_i(A_i) \ge \bmu_i(2, A_i \cup M')$. So, in this case $\mathcal{A}$ is a \gmms allocation.

	\medskip

	\noindent\ul{Case 2 (Two agents receive $2$ goods each).}
	Let  $\hat{\jmath}_p$ be the agent who ended up with $p$ and $\hat{\jmath}_q$ be the agent who ended up with $q$ (in addition to her 
	other good
	$d$) after line \ref{line:ece_pq} of Algorithm \ref{alg:m=n+2}. Clearly,  
	these are the only agents who receive $2$ goods.
	Further, let $A_{\hat{\jmath}_p} = \{a, b\}$ and  $A_{\hat{\jmath}_2} = \{c, d\}$, i.e., $M' = \{a, b, c\}$ and $a, b$ are $\hat{\jmath}_p$'s two favorite goods from $M'$. 
	
	Again, by repeatedly using Lemma \ref{lem:monotonicity}, the problem reduces to a small number of subcases involving at most one agent that received $1$ good. Specifically, to show that $\mathcal{A}$ is a \gmms allocation, it suffices to show
	\begin{enumerate}[leftmargin=*,itemsep=3pt,topsep=2pt,parsep=0pt,partopsep=0pt,label=(\roman*)]
		\item $v_i(A_i) \ge \bmu_i(2, A_i \cup A_{\hat{\jmath}_p})$ for any $i\in N\mysetminus \{\hat{\jmath}_p, \hat{\jmath}_q\}$ \label{part:m=n+2_a}
		\item $v_i(A_i) \ge \bmu_i(2, A_i \cup A_{\hat{\jmath}_q})$ for any $i\in N\mysetminus \{\hat{\jmath}_p, \hat{\jmath}_q\}$ \label{part:m=n+2_b}
		\item $v_i(A_i) \ge \bmu_i(3, A_i \cup A_{\hat{\jmath}_q}\cup A_{\hat{\jmath}_q})$ for any $i\in N\mysetminus \{\hat{\jmath}_p, \hat{\jmath}_q\}$\label{part:m=n+2_c}
		\item $v_{\hat{\jmath}_q}(A_{\hat{\jmath}_q}) \ge \bmu_{\hat{\jmath}_q}(2, A_{\hat{\jmath}_p} \cup A_{\hat{\jmath}_q})$ \label{part:m=n+2_d}
		\item $v_{\hat{\jmath}_p}(A_{\hat{\jmath}_p}) \ge \bmu_{\hat{\jmath}_p}(2, A_{\hat{\jmath}_p} \cup A_{\hat{\jmath}_q})$ \label{part:m=n+2_e}		
	\end{enumerate}
	We first deal with subcases \ref{part:m=n+2_a}, \ref{part:m=n+2_b} and \ref{part:m=n+2_c}. By how round-robin works, we know that $i$ preferred her initial good to any good in $M'$. Moreover, by how envy-cycle-elimination works, we know that $i$ did not envy the bundle $\{d\}$ right before $q$ was added to it and that her current good is no worse than any good she had earlier. Thus, $i$'s current good is her favorite good in $A_i \cup M'$. Then   \ref{part:m=n+2_a}, \ref{part:m=n+2_b} and \ref{part:m=n+2_c} all follow by Lemma \ref{lem:max_mms}.
	
	We next move to subcase \ref{part:m=n+2_d}. Like above, because of round-robin, we know that $\hat{\jmath}_q$ preferred her initial good to any good in $M'$ and, because of envy-cycle-elimination, her good $d$ is no worse than her initial good. Thus, $d$ is $\hat{\jmath}_q$'s favorite good in $A_{\hat{\jmath}_p} \cup A_{\hat{\jmath}_q}$. Even if $c$ is her least favorite good, \ref{part:m=n+2_d} directly follows from part \ref{part:4_mms_a} of Lemma \ref{lem:4_mms}.
	
	Finally, we consider  subcase \ref{part:m=n+2_e}. If $A_{\hat{\jmath}_p}$ contains $\hat{\jmath}_p$'s favorite good in $A_{\hat{\jmath}_p} \cup A_{\hat{\jmath}_q}$, then \ref{part:m=n+2_e}  follows from part \ref{part:4_mms_a} of Lemma \ref{lem:4_mms}. So, suppose that $\hat{\jmath}_p$'s favorite good is either $c$ or $d$. By line \ref{line:alloc_pq} of Algorithm \ref{alg:m=n+2}, we know that $\min\{v_{\hat{\jmath}_p}(a), v_{\hat{\jmath}_p}(b)\} \ge v_{\hat{\jmath}_p}(c)$. Thus, it must be that $d$ is $\hat{\jmath}_p$'s favorite good and that $a, b$ are her second and third favorite goods. Moreover, $\hat{\jmath}_p$ did not envy the bundle $\{d\}$ right before $q$ was added to it and $\{a,b\}$ is no worse than any bundle she had earlier. That is,  $v_{\hat{\jmath}_p}(\{a,b\})\ge v_{\hat{\jmath}_p}(d)$. Then \ref{part:m=n+2_e} follows from part \ref{part:4_mms_b} of Lemma \ref{lem:4_mms}.
	
	So, $\mathcal{A}$ is a \gmms allocation in this case as well.
\end{proof}

\begin{corollary}\label{cor:m=n+2}
	When $m\le n+2$, we can efficiently find \pmms and \efx allocations.
\end{corollary}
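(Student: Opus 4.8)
The plan is to read off both guarantees from Theorem~\ref{thm:m=n+2} with essentially no extra work. The \pmms part is immediate: Theorem~\ref{thm:m=n+2} already produces, in polynomial time, a \gmms allocation $\mathcal{A}$, and since \gmms is a strictly stronger requirement than \pmms (instantiating Definition~\ref{def:GMMS} with $N'=\{i,j\}$ yields exactly the condition of Definition~\ref{def:PMMS}), the very same $\mathcal{A}$ is a \pmms allocation. So \pmms needs only a one-line invocation.

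For \efx the tempting shortcut is the implication ``\pmms is stronger than \efx''; but, as the footnote of this section stresses, that implication only holds when all values are positive, whereas I want the fully general statement corresponding to the version of Definition~\ref{def:EF1-EFX}\ref{def:EFX} that quantifies over \emph{all} $g\in A_j$. I would therefore verify \efx directly, the key observation being that the inequalities already established \emph{inside} the proof of Theorem~\ref{thm:m=n+2} \emph{are} the \efx inequalities in disguise. Concretely, \efx for a pair $(i,j)$ is equivalent to $v_i(A_i)\ge v_i(A_j)-\min_{g\in A_j}v_i(g)$, so it is vacuous whenever $|A_j|\le 1$; hence only the at most two ``large'' bundles (of size $2$ or $3$) ever need checking, and those are precisely the bundles around which the proof of Theorem~\ref{thm:m=n+2} is organized.

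I would then proceed case by case, mirroring that proof. For $m\le n$ every bundle is a singleton and \efx is trivial; for $m=n+1$ the only large bundle is $\{h_n,h_n'\}$, and the round-robin facts $v_i(h_i)\ge v_i(h_n)$ and $v_i(h_i)\ge v_i(h_n')$ give exactly $v_i(A_i)\ge\max_{g\in A_j}v_i(A_j\mysetminus\{g\})$. For $m=n+2$, in Case~1 the bound $v_i(A_i)\ge v_i(p)=v_i(A_{\hat{\jmath}})-\min_{g\in A_{\hat{\jmath}}}v_i(g)$ proved there is \emph{literally} the \efx condition for the three-good bundle; in Case~2 the statements ``$i$'s current good is her favorite good in $A_i\cup M'$'', ``$d$ is $\hat{\jmath}_q$'s favorite good in $A_{\hat{\jmath}_p}\cup A_{\hat{\jmath}_q}$'', and ``$v_{\hat{\jmath}_p}(\{a,b\})\ge v_{\hat{\jmath}_p}(d)$'' each say that the envious agent's own bundle dominates every single good of the other large bundle, which is exactly \efx against a two-good bundle. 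The main obstacle is the pair facing $A_{\hat{\jmath}_q}=\{c,d\}$, whose good $d$ is a round-robin good and hence lies \emph{outside} $M'$; here the clean ``favorite in $A_i\cup M'$'' reasoning does not cover $d$, and one must instead combine $v_i(A_i)\ge v_i(c)$ (round-robin preference over $M'$ together with part~\ref{fact:value} of Theorem~\ref{thm:ece}) with $v_i(A_i)\ge v_i(d)$ (the fact, already used in the proof of Theorem~\ref{thm:m=n+2}, that $i$ did not envy $\{d\}$ at the moment $q$ was appended to it). Since none of these arguments invokes positivity of the values, they deliver the strong \efx guarantee, which is exactly what makes the result more general than the \pmms~$\Rightarrow$~\efx route.
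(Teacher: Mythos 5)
Your proposal is correct and follows essentially the same route as the paper: \pmms is read off immediately from the \gmms guarantee of Theorem~\ref{thm:m=n+2}, and \efx is verified directly (avoiding the positivity-dependent \pmms$\Rightarrow$\efx implication) by checking only the large bundles and reusing exactly the inequalities established in the theorem's proof---the $v_i(A_i)\ge v_i(p)=v_i(A_j)-\min_{g\in A_j}v_i(g)$ bound for the three-good bundle, and the round-robin plus ``no one envied $\{d\}$ when $q$ was added'' facts for the two-good bundles. The case decomposition and the key observations match the paper's proof of Corollary~\ref{cor:m=n+2}.
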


\begin{proof}
	For \pmms this is trivial. Given that \citet{CaragiannisKMPS19} show that each \pmms allocation is an \efx allocation when all values are positive, we directly get the existence of \efx allocations for this special case. However, going through the proof of Theorem \ref{thm:m=n+2}, it is not hard to see that the result holds under the  stronger Definition \ref{def:EF1-EFX}\ \ref{def:EFX}.  
	
	Fix two distinct agents $i, j \in N$. As usual, if $|A_j| = 1$, then $v_i(A_i) \ge \max_{g\in A_j} v_i(A_j\mysetminus \{g\}) = 0$, so assume that $|A_j| \ge 2$.  
	
	First consider the case where $|A_j| = 2$ and let $|A_j|=\{x,y\}$. If $j$ is the agent who ended up with $p$, then we directly have  $v_i(A_i) \ge v_i(h_i) \ge \max\{v_i(x), v_i(y)\}$. So suppose $j$ is the agent who ended up with $q = x$ in addition to her other good $y$. First, notice that $v_i(A_i) \ge v_i(h_i) \ge v_i(x)$. Because Algorithm \ref{alg:ece} added $x$ to $\{y\}$, we also get that $v_i(A_i) \ge v_i(y)$. We conclude that the allocation is \efx.

	We then consider the case where $|A_j| = 3$. That is, $j$ received both $p$ and $q$ and $A_j = M'$. But since Algorithm \ref{alg:ece} added $q$ to $p$, we have
	\[v_i(A_i) \ge v_i(p) = v_i(M') - v_i(q) = v_i(A_j) - \min_{g\in A_j} v_i(g).\]
	Again, the allocation is \efx.	
\end{proof}

\section*{Acknowledgments}
This work has been partly supported by the COST Action CA16228 ``European Network for Game Theory''.
G.~Amanatidis was partially supported by the NWO Gravitation project NETWORKS (No.~024.002.003), the ERC Advanced Grant  AMDROMA (No.~788893), the MIUR PRIN project ALGADIMAR and the NWO Veni project No.~VI.Veni.192.153.




\appendix

\section{An Example Illustrating the  Different Notions}
\label{app:example}
Suppose we have the following instance with three agents and five goods.
It can be verified that this instance does not admit an \ef allocation.

\begin{table}[ht]
	\centering
	\begin{tabular}{@{}cccccc@{}} 
		\toprule		& $a$ & $b$ & $c$ & $d$ & $e$ \\ \midrule
		Agent $1$ & $10$ & $6$ & $7$ & $5$ & $3$  \\
		Agent $2$ & $6$ & $8$ & $12$ & $7$ & $5$ \\
		Agent $3$ & $10$ & $11$ & $3$ & $2$ & $7$ \\
		\bottomrule
	\end{tabular}
	\label{table:Example2}
\end{table}

First consider  the allocation ${\mathcal A}= (A_1, A_2, A_3) = (\{a, d\}, \{c\}, \{b, e\})$. This is not an envy-free allocation, since $v_2(A_2) < v_2(A_1)$. However, we claim that it satisfies all the relaxed fairness criteria. 
\begin{itemize}[leftmargin=*,itemsep=3pt,topsep=2pt,parsep=0pt,partopsep=0pt]
	\item To see that $\mathcal A$ is \efx for agent 1, we have
	$v_1(A_1) = 15$ and, on the other hand, $v_1(A_2 \setminus \{ c \}) = 0$, $v_1(A_3 \setminus \{ b \}) = 3$ and $v_1(A_3 \setminus \{ e \}) = 6$.
	Thus, the \efx inequalities for agent $1$ are satisfied. Similarly, we can check that for agents $2$ and $3$ the corresponding inequalities are satisfied as well.
	\item To verify that $\mathcal{A}$ is \mms, \pmms, and \gmms, we first observe that $\bmu_1(3, M) = 10$, $\bmu_2(3, M) = 12$ and $\bmu_3(3, M) = 10$. Hence, we have that $v_i(A_i) \geq \bmu_i(3, M)$ for $i=1, 2, 3$, which establishes \mms. We can also examine the three different pairs of agents and see that the pairwise maximin shares are attained. For instance, $v_1(A_1) \geq \bmu_1(A_1\cup A_2) = 10$, and $v_1(A_1) \geq \bmu_1(A_1\cup A_3) = 11$. Hence, $\mathcal{A}$ is a \pmms allocation. As there is no other subset of agents to examine, this means $\mathcal{A}$ is also \gmms.
\end{itemize}

To demonstrate the approximate versions of the fairness criteria, consider now the allocation ${\mathcal A'} = (A_1', A_2', A_3') = (\{b\}, \{c, e\}, \{a, d\})$. 
\begin{itemize}[leftmargin=*,itemsep=3pt,topsep=2pt,parsep=0pt,partopsep=0pt]
	\item $\mathcal A'$ is neither \ef nor \efx, but it is easy to check that it is \efo. We also claim that it is a 0.6-\efx allocation. To see this, note that the approximation is due to agent 1, since agents 2 and 3 do not experience any envy. Observe that $v_1(A_1') = 0.6\cdot v_1(A_3' \setminus \{ d \}) \geq v_1(A_3' \setminus \{ a \})$, and that $v_1(A_1') \geq 0.6\cdot v_1(A_2' \setminus \{ e \}) \geq 0.6\cdot v_1(A_2' \setminus \{ c \})$. Hence, indeed, ${\mathcal A'}$ is a 0.6-\efx allocation.
	\item It is also easy to verify that ${\mathcal A'}$ is a 0.6-\mms allocation, and the same approximation holds for \pmms and \gmms. Again it suffices to see the performance of agent 1, and we have $v_i(A_1') \geq 0.6 \bmu_1(3, M)$, implying the approximation for \mms. It also holds that $v_1(A_1') = 0.6\cdot \bmu_1(A_1'\cup A_3') \geq 0.6\cdot \bmu_1(A_1'\cup A_2')$, which determines the approximation for \pmms and \gmms.
\end{itemize}


\section{On the Connections Between the Approximate Versions of Fairness Guarantees}
\label{app:connections}
Here we deal with  the question of whether the approximation guarantee established for one of the considered fairness notions directly implies the approximation guarantees for the other notions.  
For example, given that we obtain a $(\phi-1)$-\efx approximation allocation, does this immediately enforce any of the fairness guarantees that we show for \efo, \pmms and \gmms? 
If this were true, our task would become simpler. Some of our proofs would not be necessary and we would not have to analyze separately the approximation guarantees for all the fairness notions.

Such questions have been studied recently by \citet{AmanatidisBM18} who gave an almost complete picture regarding the relations among the notions of \efo, \efx, \mms, and \pmms. 
 
We exhibit here that none of our improved approximation factors can be derived by a black-box use of the guarantees provided in \citep{AmanatidisBM18}. 
Hence, the separate analyses of Algorithm \ref{alg:efx} for each notion are necessary for obtaining our results.

To begin with, suppose we have an \efo allocation. 
The results in Section 3 of \citet{AmanatidisBM18} state that an arbitrary \efo allocation cannot yield a constant factor approximation for \efx and \mms---hence neither for \gmms. It is also shown that every \efo allocation is $1/2$-\pmms, and this is tight, but this guarantee  is weaker than what we obtain here for \pmms.  

Moving on to approximate \efx allocations, Proposition 3.7 in \citep{AmanatidisBM18} states that an arbitrary $\alpha$-\efx allocation is also a ${2\alpha}/{(2+\alpha)}$-\pmms allocation, and this is tight. This means that our $(\phi-1)$-\efx approximation in Theorem \ref{thm:efx} immediately implies only a ${2(\phi -1)}/{(1+\phi)}\approx 0.472$-\pmms approximation, which is worse than the guarantee in Theorem \ref{thm:pmms}. 
Further, Proposition 3.5 in \citep{AmanatidisBM18} implies (via \mms) that an arbitrary $\alpha$-\efx allocation is not necessarily a $\max\{\frac{\alpha}{1+\alpha}, \frac{8\alpha}{11+2\alpha}\}$-\gmms allocation. For our case, this means that we cannot obtain a better than $0.404$-\gmms approximation directly from Theorem \ref{thm:efx}.

Suppose now that we obtain an $\alpha$-\mms allocation. Proposition 4.5 and Corollary 4.9 in \citep{AmanatidisBM18} show that this cannot yield a constant factor approximation for either \efx or \pmms. The same then is true for \gmms. 
Similarly, Propositions 4.4 and 4.8 in \citep{AmanatidisBM18} show that an $\alpha$-\pmms allocation does not necessarily yield a constant approximation for \efx or \mms, and thus neither for \gmms.
Hence, our Theorems \ref{thm:pmms} and \ref{thm:pmms2} cannot provide the guarantees we have in Theorems \ref{thm:efx}, \ref{thm:gmms} and \ref{thm:gmms2} for \efx and \gmms. 

The only implications left to examine is when we start with an arbitrary $\alpha$-\gmms allocation. Since \gmms allocations were not considered by \citet{AmanatidisBM18}, we show here the following result. 
\begin{proposition}\label{prop:a_gmms}
	Let $\alpha\in (0,1)$.	For $n\ge 2$, an $\alpha$-\gmms  allocation is not necessarily an $(\frac{\alpha}{2-\alpha}+\varepsilon)$-\efo or an $(\alpha+\varepsilon)$-\pmms, or a $\beta$-\efx allocation for any  $\varepsilon>0, \beta \in (0,1)$.
\end{proposition}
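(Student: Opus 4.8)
The plan is to prove the three non-implications by exhibiting explicit additive instances, together with allocations, that are $\alpha$-\gmms yet fail each stronger guarantee. Since for two agents \gmms coincides with \pmms (the only non-trivial group is $N$ itself, and singletons are vacuous), I would build all constructions on $n=2$ agents and then pad to arbitrary $n\ge 2$ by adding \emph{isolated} agents, each receiving a private good worth $0$ to everybody else. Such an agent attains her (zero) maximin share trivially, and, by the reasoning behind Lemma~\ref{lem:monotonicity}, a worthless extra good and an extra part never raise the maximin share of agents $1$ or $2$; hence the binding \gmms constraint remains the pairwise one between $1$ and $2$, and it suffices to control that pair.

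For the \efo and \pmms bounds I would use a single family of instances. Let agent~$1$ receive one good $a$ with $v_1(a)=\alpha$, and let agent~$2$ receive a large collection of \emph{tiny} goods of total value $2-\alpha$ to agent~$1$, chosen (e.g.\ as $p$ copies of value $\tfrac{1-\alpha}{p}$ plus $q$ copies of value $\tfrac1q$) so that $\{a\}$ can be completed, using some of these goods, to exactly one half of a perfectly balanced two-split of $A_1\cup A_2$. Then $\bmu_1(2,A_1\cup A_2)=1$, so $v_1(A_1)=\alpha=\alpha\cdot\bmu_1$, and with agent~$2$ valuing her own bundle highly the allocation is \emph{exactly} $\alpha$-\gmms. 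The \pmms claim is then immediate: the \pmms factor equals $\alpha$, so the allocation is not $(\alpha+\varepsilon)$-\pmms. For \efo, the purpose of using many tiny goods is that deleting a single good from $A_2$ removes only the largest value $\eta:=\max_{g} v_1(g)$, giving \efo factor $\tfrac{\alpha}{(2-\alpha)-\eta}$; letting $p,q\to\infty$ drives $\eta\to 0$ and this factor down to $\tfrac{\alpha}{2-\alpha}$ from above, so for each $\varepsilon>0$ a suitable choice makes the allocation fail $(\tfrac{\alpha}{2-\alpha}+\varepsilon)$-\efo.

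The \efx claim is qualitatively different and is where a separate idea is needed. With strictly positive valuations, $\alpha$-\gmms forces $v_i(A_i)>0$ whenever some other bundle contains two goods that $i$ values positively, so a total collapse of \efx is impossible; I would therefore exploit a \emph{zero-valued} good, as permitted by Definition~\ref{def:EF1-EFX}\,\ref{def:EFX} and its footnote. Take agent~$1$'s own bundle to be worth $0$ to her, and give agent~$2$ a good $g_1$ with $v_1(g_1)=1$ together with a good $g_2$ with $v_1(g_2)=0$. Since $g_1$ is the only good agent~$1$ values at all and it lies outside her bundle, every $\bmu_1(|N'|,\cdot)=0$, so the $\alpha$-\gmms inequalities hold trivially as $0\ge\alpha\cdot 0$ (the allocation is in fact exactly \gmms); yet the \efx requirement $v_1(A_1)\ge\beta\,v_1(A_2\mysetminus\{g_2\})$ reads $0\ge\beta\cdot 1$, which fails for every $\beta\in(0,1)$. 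Thus the allocation is a $\beta$-\efx allocation for no $\beta$.

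I expect the main obstacle to be the \efo construction, whose delicate point is to pin the maximin share at the balanced value $1$ (keeping the \gmms factor exactly $\alpha$, so the counterexample is honest) while simultaneously making single-good deletions asymptotically useless (pushing the \efo factor down to $\tfrac{\alpha}{2-\alpha}$). Using many tiny goods of two carefully chosen denominations resolves this tension, but one must verify both that an exact balanced split exists and that neither agent~$2$ nor any larger group yields a smaller \gmms, \pmms, or \efo ratio. The remaining checks—that the padded instance is $\alpha$-\gmms across all subsets and all $n\ge 2$—are routine via Lemma~\ref{lem:monotonicity}.
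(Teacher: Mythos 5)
Your constructions are correct, and they differ from the paper's in two respects worth noting. The paper exhibits a \emph{single} instance (with $4k+n$ goods) whose one allocation simultaneously witnesses all three non-implications: agent $1$'s bundle is $2k$ goods worth $\alpha/(2k)$ each, a second bundle holds $2k$ goods worth $(2-\alpha)/(2k)$ plus a $0.01$-good (killing \efo and pinning \pmms at $\alpha$ as $k\to\infty$, exactly your tiny-goods idea), and a third bundle pairs a $0.01$-good with a good worth $2/\beta$, so that deleting the cheap good leaves ratio $\alpha\beta/2<\beta$ and \efx fails while all values stay strictly positive. You instead split the job across two instances and, for \efx, exploit a zero-valued good to get $v_1(A_1)=0$ against $v_1(A_2\mysetminus\{g_2\})=1$; this is legitimate under Definition \ref{def:EF1-EFX}\,\ref{def:EFX} and its footnote, and it buys a single, $\beta$-independent counterexample showing \emph{total} collapse of \efx, whereas the paper must choose its instance as a function of $\beta$. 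What the paper's unified instance buys is a marginally stronger statement (one allocation failing everything at once) and no reliance on zero values. Your padding via isolated agents and the reduction of all group constraints to the pair $\{1,2\}$ is sound (worthless goods and extra parts never raise agent $1$'s maximin share, and $\bmu_1(k,S)$ is nonincreasing in $k$), and your choice of two denominations $\tfrac{1-\alpha}{p}$ and $\tfrac1q$ cleanly pins $\bmu_1(2,A_1\cup A_2)=1$, which the paper only gets asymptotically. In short: correct, same spirit for \efo/\pmms, genuinely different and somewhat cleaner for \efx.
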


\begin{proof}
	Consider the following 
	instance with $n$ agents and $4k +n$ goods for $k\ge 1$. Let $\beta \in (0,1)$. We focus on agent $1$ and we have 
	\[ 
	v_1(g_j)= \left\{
	\begin{array}{cl}
	\frac{\alpha}{2k} & 1 \leq j\leq 2k\,, \vspace{3pt}\\
	\frac{2-\alpha}{2k} &2k+1 \leq j\leq 4k\,,  \vspace{3pt}\\
	0.01 & 4k+1\le j\le 4k+2\,,  \vspace{3pt}\\
	2/\beta &  j = 4k + 3\,, \vspace{3pt}\\
	1 & 4k+3 \le j \le 4k +n \,.
	\end{array} 
	\right. 
	\]
	Let $\mathcal{A} = (A_1, \ldots, A_n)= \allowbreak (\{g_1, \allowbreak \ldots, \allowbreak g_{2k}\}, \allowbreak \{g_{2k+1},  \allowbreak \ldots, \allowbreak g_{4k+1}\}, \allowbreak \{g_{4k+2},g_{4k+3}\}, \allowbreak \{g_{4k+4}\}, \allowbreak \ldots, \allowbreak \{g_{4k+n}\})$ and assume that agents $2$ through $n$ are not envious, i.e., they value their corresponding bundles much higher than everything else. It is only a matter of simple calculations to see that according to agent 1, this is an $\alpha$-\gmms allocation which is, however, only an $\frac{\alpha}{2-\alpha}$-\efo, an $\alpha$-\gmms and a $\frac{\beta}{2}$-\efx allocation. 
\end{proof}

Plugging $\alpha = \frac{2}{\phi +2}$ in Proposition \ref{prop:a_gmms}, we get that an arbitrary $0.553$-\gmms  allocation is not necessarily a $0.383$-\efo or a $0.554$-\pmms allocation, and has no guarantee whatsoever with respect to \efx. Hence, our Theorem \ref{thm:gmms} (or even Theorem \ref{thm:gmms2} for that matter) does not imply the guarantees we obtain for \efx and \pmms in Theorems \ref{thm:efx} and \ref{thm:pmms} respectively.


\bibliography{efx-gmms-references}

\begin{thebibliography}{}

\bibitem[\protect\citeauthoryear{Aleksandrov and
  Walsh}{2019}]{AleksandrovWalsh2019}
Martin Aleksandrov and Toby Walsh.
\newblock Greedy algorithms for fair division of mixed manna.
\newblock {\em CoRR}, abs/1911.11005, 2019.

\bibitem[\protect\citeauthoryear{Amanatidis \bgroup \em et al.\egroup
  }{2016}]{AmanatidisBM16}
Georgios Amanatidis, Georgios Birmpas, and Evangelos Markakis.
\newblock On truthful mechanisms for maximin share allocations.
\newblock In Subbarao Kambhampati, editor, {\em Proceedings of the Twenty-Fifth
  International Joint Conference on Artificial Intelligence, {IJCAI} 2016, New
  York, NY, USA, 9-15 July 2016}, pages 31--37. {IJCAI/AAAI} Press, 2016.

\bibitem[\protect\citeauthoryear{Amanatidis \bgroup \em et al.\egroup
  }{2017a}]{ABCM17}
Georgios Amanatidis, Georgios Birmpas, George Christodoulou, and Evangelos
  Markakis.
\newblock Truthful allocation mechanisms without payments: Characterization and
  implications on fairness.
\newblock In {\em Proceedings of the 2017 {ACM} Conference on Economics and
  Computation, {EC} '17, Cambridge, MA, USA, June 26-30, 2017}, pages 545--562.
  {ACM}, 2017.

\bibitem[\protect\citeauthoryear{Amanatidis \bgroup \em et al.\egroup
  }{2017b}]{AMNS17}
Georgios Amanatidis, Evangelos Markakis, Afshin Nikzad, and Amin Saberi.
\newblock Approximation algorithms for computing maximin share allocations.
\newblock {\em {ACM} Trans. Algorithms}, 13(4):52:1--52:28, 2017.

\bibitem[\protect\citeauthoryear{Amanatidis \bgroup \em et al.\egroup
  }{2018}]{AmanatidisBM18}
Georgios Amanatidis, Georgios Birmpas, and Vangelis Markakis.
\newblock Comparing approximate relaxations of envy-freeness.
\newblock In J{\'{e}}r{\^{o}}me Lang, editor, {\em Proceedings of the
  Twenty-Seventh International Joint Conference on Artificial Intelligence,
  {IJCAI} 2018, July 13-19, 2018, Stockholm, Sweden.}, pages 42--48. ijcai.org,
  2018.

\bibitem[\protect\citeauthoryear{Amanatidis \bgroup \em et al.\egroup
  }{2020}]{AmanatidisBRHV20}
Georgios Amanatidis, Georgios Birmpas, Aris Filos{-}Ratsikas, Alexandros
  Hollender, and Alexandros~A. Voudouris.
\newblock Maximum nash welfare and other stories about {EFX}.
\newblock {\em CoRR}, abs/2001.09838, 2020.

\bibitem[\protect\citeauthoryear{Barman and Murthy}{2017}]{BM17}
Siddharth Barman and Sanath Kumar~Krishna Murthy.
\newblock Approximation algorithms for maximin fair division.
\newblock In {\em Proceedings of the 2017 {ACM} Conference on Economics and
  Computation, {EC} '17, Cambridge, MA, USA, June 26-30, 2017}, pages 647--664.
  {ACM}, 2017.

\bibitem[\protect\citeauthoryear{Barman \bgroup \em et al.\egroup
  }{2018a}]{BBMN18}
Siddharth Barman, Arpita Biswas, Sanath Kumar~Krishna Murthy, and Yadati
  Narahari.
\newblock Groupwise maximin fair allocation of indivisible goods.
\newblock In {\em Proceedings of the Thirty-Second {AAAI} Conference on
  Artificial Intelligence, (AAAI-18), New Orleans, LA, USA, February 2-7,
  2018}, pages 917--924. {AAAI} Press, 2018.

\bibitem[\protect\citeauthoryear{Barman \bgroup \em et al.\egroup
  }{2018b}]{BMV18}
Siddharth Barman, Sanath~Kumar Krishnamurthy, and Rohit Vaish.
\newblock Finding fair and efficient allocations.
\newblock In {\em Proceedings of the 2018 {ACM} Conference on Economics and
  Computation, Ithaca, NY, USA, June 18-22, 2018}, pages 557--574. {ACM}, 2018.

\bibitem[\protect\citeauthoryear{Brams and Taylor}{1996}]{BT96}
Steven~J. Brams and Alan~D. Taylor.
\newblock {\em Fair division: From cake-cutting to dispute resolution}.
\newblock Cambridge University Press, 1996.

\bibitem[\protect\citeauthoryear{Brandt \bgroup \em et al.\egroup
  }{2016}]{HandbookComSoC2016}
Felix Brandt, Vincent Conitzer, Ulle Endriss, J{\'{e}}r{\^{o}}me Lang, and
  Ariel~D. Procaccia, editors.
\newblock {\em Handbook of Computational Social Choice}.
\newblock Cambridge University Press, 2016.

\bibitem[\protect\citeauthoryear{Budish}{2011}]{Budish11}
Eric Budish.
\newblock The combinatorial assignment problem: Approximate competitive
  equilibrium from equal incomes.
\newblock {\em Journal of Political Economy}, 119(6):1061--1103, 2011.

\bibitem[\protect\citeauthoryear{Caragiannis \bgroup \em et al.\egroup
  }{2019a}]{CaragiannisGH19}
Ioannis Caragiannis, Nick Gravin, and Xin Huang.
\newblock Envy-freeness up to any item with high nash welfare: The virtue of
  donating items.
\newblock In {\em Proceedings of the 2019 {ACM} Conference on Economics and
  Computation, {EC} 2019, Phoenix, AZ, USA, June 24-28, 2019.}, pages 527--545.
  {ACM}, 2019.

\bibitem[\protect\citeauthoryear{Caragiannis \bgroup \em et al.\egroup
  }{2019b}]{CaragiannisKMPS19}
Ioannis Caragiannis, David Kurokawa, Herv{\'{e}} Moulin, Ariel~D. Procaccia,
  Nisarg Shah, and Junxing Wang.
\newblock The unreasonable fairness of maximum {N}ash welfare.
\newblock {\em {ACM} Trans. Economics and Comput.}, 7(3):12:1--12:32, 2019.

\bibitem[\protect\citeauthoryear{Chan \bgroup \em et al.\egroup
  }{2019}]{CCLW19}
Hau Chan, Jing Chen, Bo~Li, and Xiaowei Wu.
\newblock Maximin-aware allocations of indivisible goods.
\newblock In Sarit Kraus, editor, {\em Proceedings of the Twenty-Eighth
  International Joint Conference on Artificial Intelligence, {IJCAI} 2019,
  Macao, China, August 10-16, 2019}, pages 137--143. ijcai.org, 2019.

\bibitem[\protect\citeauthoryear{Chaudhury \bgroup \em et al.\egroup
  }{2020a}]{ChaudhuryGM2020}
Bhaskar~Ray Chaudhury, Jugal Garg, and Kurt Mehlhorn.
\newblock {EFX} exists for three agents.
\newblock {\em CoRR}, abs/2002.05119, 2020.

\bibitem[\protect\citeauthoryear{Chaudhury \bgroup \em et al.\egroup
  }{2020b}]{CKMS19}
Bhaskar~Ray Chaudhury, Telikepalli Kavitha, Kurt Mehlhorn, and Alkmini
  Sgouritsa.
\newblock A little charity guarantees almost envy-freeness.
\newblock In {\em Proceedings of the Fourty-First Annual {ACM-SIAM} Symposium
  on Discrete Algorithms, {SODA} 2020, Salt Lake City, UT, USA, January 5-8,
  2020}. {SIAM}, 2020.
\newblock To appear.

\bibitem[\protect\citeauthoryear{Foley}{1967}]{Foley67}
Duncan~K. Foley.
\newblock Resource allocation and the public sector.
\newblock {\em Yale Economics Essays}, 7:45--98, 1967.

\bibitem[\protect\citeauthoryear{Gamow and Stern}{1958}]{GS58}
George Gamow and Marvin Stern.
\newblock {\em Puzzle-Math}.
\newblock Viking press, 1958.

\bibitem[\protect\citeauthoryear{Garg and McGlaughlin}{2019}]{GargM19}
Jugal Garg and Peter McGlaughlin.
\newblock Improving nash social welfare approximations.
\newblock In Sarit Kraus, editor, {\em Proceedings of the Twenty-Eighth
  International Joint Conference on Artificial Intelligence, {IJCAI} 2019,
  Macao, China, August 10-16, 2019}, pages 294--300. ijcai.org, 2019.

\bibitem[\protect\citeauthoryear{Garg and Taki}{2019}]{GargTaki19}
Jugal Garg and Setareh Taki.
\newblock An improved approximation algorithm for maximin shares.
\newblock {\em CoRR}, abs/1903.00029, 2019.

\bibitem[\protect\citeauthoryear{Garg \bgroup \em et al.\egroup
  }{2019a}]{GMT19}
Jugal Garg, Peter McGlaughlin, and Setareh Taki.
\newblock Approximating maximin share allocations.
\newblock In Jeremy~T. Fineman and Michael Mitzenmacher, editors, {\em 2nd
  Symposium on Simplicity in Algorithms, SOSA@SODA 2019, January 8-9, 2019 -
  San Diego, CA, {USA}}, volume~69 of {\em {OASICS}}, pages 20:1--20:11.
  Schloss Dagstuhl - Leibniz-Zentrum fuer Informatik, 2019.

\bibitem[\protect\citeauthoryear{Garg \bgroup \em et al.\egroup
  }{2019b}]{GargMT19}
Jugal Garg, Peter McGlaughlin, and Setareh Taki.
\newblock Approximating maximin share allocations.
\newblock In {\em 2nd Symposium on Simplicity in Algorithms, SOSA@SODA 2019,
  January 8-9, 2019 - San Diego, CA, {USA}}, volume~69 of {\em {OASICS}}, pages
  20:1--20:11. Schloss Dagstuhl - Leibniz-Zentrum fuer Informatik, 2019.

\bibitem[\protect\citeauthoryear{Ghodsi \bgroup \em et al.\egroup
  }{2018}]{GHSSY18}
Mohammad Ghodsi, Mohammad~Taghi Hajiaghayi, Masoud Seddighin, Saeed Seddighin,
  and Hadi Yami.
\newblock Fair allocation of indivisible goods: Improvements and
  generalizations.
\newblock In {\em Proceedings of the 2018 {ACM} Conference on Economics and
  Computation, Ithaca, NY, USA, June 18-22, 2018}, pages 539--556. {ACM}, 2018.

\bibitem[\protect\citeauthoryear{Kurokawa \bgroup \em et al.\egroup
  }{2018}]{KurokawaPW18}
David Kurokawa, Ariel~D. Procaccia, and Junxing Wang.
\newblock Fair enough: Guaranteeing approximate maximin shares.
\newblock {\em J. {ACM}}, 65(2):8:1--8:27, 2018.

\bibitem[\protect\citeauthoryear{Kurokawa}{2017}]{Kurokawa17}
David Kurokawa.
\newblock {\em Fair Division in Game Theoretic Settings}.
\newblock PhD thesis, Carnegie Mellon University, 2017.

\bibitem[\protect\citeauthoryear{Lipton \bgroup \em et al.\egroup
  }{2004}]{LMMS04}
Richard~J. Lipton, Evangelos Markakis, Elchanan Mossel, and Amin Saberi.
\newblock On approximately fair allocations of indivisible goods.
\newblock In {\em Proceedings 5th {ACM} Conference on Electronic Commerce
  (EC-2004), New York, NY, USA, May 17-20, 2004}, pages 125--131. {ACM}, 2004.

\bibitem[\protect\citeauthoryear{Markakis}{2017}]{Markakis17-survey}
Evangelos Markakis.
\newblock Approximation algorithms and hardness results for fair division with
  indivisible goods.
\newblock In Ulle Endriss, editor, {\em Trends in Computational Social Choice},
  chapter~12. AI Access, 2017.

\bibitem[\protect\citeauthoryear{Moulin}{1990}]{Moulin90}
Herv{\'{e}} Moulin.
\newblock Uniform externalities: Two axioms for fair allocation.
\newblock {\em Journal of Public Economics}, 43(3):305--326, 1990.

\bibitem[\protect\citeauthoryear{Moulin}{2003}]{Moulin03}
Herv{\'{e}} Moulin.
\newblock {\em Fair division and collective welfare}.
\newblock {MIT} Press, 2003.

\bibitem[\protect\citeauthoryear{Plaut and Roughgarden}{2018}]{PR18}
Benjamin Plaut and Tim Roughgarden.
\newblock Almost envy-freeness with general valuations.
\newblock In {\em Proceedings of the Twenty-Ninth Annual {ACM-SIAM} Symposium
  on Discrete Algorithms, {SODA} 2018, New Orleans, LA, USA, January 7-10,
  2018}, pages 2584--2603. {SIAM}, 2018.

\bibitem[\protect\citeauthoryear{Robertson and Webb}{1998}]{RW98}
Jack~M. Robertson and William~A. Webb.
\newblock {\em Cake-cutting algorithms: Be fair if you can}.
\newblock A {K} Peters, 1998.

\bibitem[\protect\citeauthoryear{Steinhaus}{1948}]{Steinhaus48}
Hugo Steinhaus.
\newblock The problem of fair division.
\newblock {\em Econometrica}, 16:101--104, 1948.

\bibitem[\protect\citeauthoryear{Suksompong}{2018}]{Suk18}
Warut Suksompong.
\newblock Approximate maximin shares for groups of agents.
\newblock {\em Mathematical Social Sciences}, 92:40--47, 2018.

\bibitem[\protect\citeauthoryear{Varian}{1974}]{Varian74}
Hal~R. Varian.
\newblock Equity, envy and efficiency.
\newblock {\em Journal of Economic Theory}, 9:63--91, 1974.

\end{thebibliography}

\end{document}